\documentclass[12pt]{iopart}
\usepackage{iopams}
\usepackage{amsthm}
\usepackage{dsfont}
\usepackage{color}
\usepackage{ulem}
\usepackage{verbatim}

\usepackage[pdfpagemode=None]{hyperref} 
\hypersetup{
    colorlinks,
    citecolor=black,
    filecolor=black,
    linkcolor=black,
    urlcolor=blue
}
\newcommand {\link}{\href}

\sloppy

\newcommand {\be}{\begin{equation}}
\newcommand {\ee}{\end{equation}}


\newtheorem{proposition}{Proposition}
\renewenvironment{proof}{\noindent{\bf Proof.}}{{\hspace*{\fill} $\square$}\newline}

\theoremstyle{definition}

\newtheorem{lemma}{Lemma}
\theoremstyle{remark}
\newtheorem{remark}{Remark}
\makeatletter
\@addtoreset{remark}{utv}
\@addtoreset{remark}{theorem}
\makeatother

\renewcommand{\Im}{\mathop{\mathrm{Im}}\nolimits}
\newcommand{\Ker}{\mathop{\mathrm{Ker}}\nolimits}
\newcommand{\diag}{\mathop{\mathrm{diag}}\nolimits}

\begin{document}

\title{Quasibosons composed of two $q$-fermions: realization by deformed oscillators}

\author{A.M.~Gavrilik, I.I.~Kachurik and Yu.A. Mishchenko}

\address{Bogolyubov Institute for Theoretical Physics, Kiev 03680, Ukraine}
\ead{omgavr@bitp.kiev.ua}

\vspace{4mm}

\vspace{2mm}

\begin{abstract}
Composite bosons, here called {\it quasibosons} (e.g. mesons, excitons, etc.) occur in various physical situations. Quasibosons differ from bosons or fermions as their creation and annihilation operators obey non-standard commutation relations,  even for the ``fermion+fermion''  composites. Our aim is to realize the operator algebra of quasibosons composed of two fermions or two $q$-fermions ($q$-deformed fermions) by the respective operators of
deformed oscillators, the widely studied objects. For this, the restrictions on quasiboson creation/annihilation operators and on the deformed oscillator (deformed boson) algebra are obtained. Their resolving proves the uniqueness of the family of deformations and gives explicitly the deformation structure function (DSF) which provides the desired realization. In the case of two fermions as con\-sti\-tu\-ents, such realization is achieved when the DSF is quadratic polynomial in the num\-ber operator. In the case of two $q$-fermions, $q\!\ne\! 1$, the obtained DSF inherits the pa\-ra\-me\-ter $q$ and does not continuously converge when $q\!\to\! 1$ to the DSF of the first case.
\end{abstract}

\pacs{02.20.Uw, 05.30.Jp, 05.30.Pr, 11.10.Lm}

\section{Introduction}

Theoretical treatment of many-particle systems is connected with a number of comp\-li\-ca\-ti\-ons. Some of them can be resolved by introducing the concept of quasiparticle or ``composite particle'', if this is possible. However, in this way we generally en\-co\-un\-ter various factors of the internal structure, which cannot be completely encapsulated into internal degrees of freedom of a composite particle. These are the nontrivial com\-mu\-ta\-ti\-on relations, or the interaction of the constituents between themselves and with other particles, etc. It is desirable to have an equivalent description of many-(composite-)par\-ti\-cle systems, almost as simple as the description of an ideal/point-like particle system, but taking into account the mentioned factors. Deformed bosons or deformed oscillators, see e.g. the review \cite{Bona}, provide possible means for the realization of such an intention. In such a case the basic characteristics of the factors connected with the internal structure would be encoded in one or more deformation parameters.

A particular realization of the mentioned idea to describe quasibosons~\cite{Per} (boson-like composites) in terms of deformed Heisenberg algebra was demonstrated by Avancini and Krein in \cite{Avan} who utilized the quonic \cite{Green} version of the deformed boson al\-geb\-ra. Note that if two or more copies (modes) are involved, different modes of quons do not commute \cite{Avan,Green}. Unlike quons, the deformed oscillators of Arik-Coon type are in\-de\-pen\-dent \cite{Arik,Jag-etal}, that is, the operators corresponding to different copies, mutually commute.

Regardless of their intrinsic origin and physical motivation, diverse models of deformed oscillators have received much attention during the 1990s and till now. Among the best known and extensively studied deformed oscillators one encounters the $q$-deformed Arik-Coon (AC) \cite{Arik}  or Biedenharn-Macfarlane (BM) \cite{Biedenharn,Macfarlane} ones, the $q$-deformed Tamm-Dancoff oscillator~\cite{TD,TD2,GR1}, and also the two-parameter $p,\!q$-deformed oscillator \cite{p-q,Arik-Fibo}. On the other hand, the so-called $\mu$-deformed oscillator is much less studied. Introduced in \cite{Jan} almost two decades ago, this deformed oscillator essentially differs from the models we have already mentioned and exhibits rather unusual properties~\cite{GKR,GR2}. Note that there exists a general approach to the description of deformed oscillators based on the concept of the deformation structure function (DSF) given in~\cite{Melja,Bona}. As the extension of the standard quantum harmonic oscillator, deformed oscillators find diverse applications in describing miscellaneous physical systems involving essential nonlinearities, from say quantum optics and the Landau problem to high energy particle phenomenology and modern quantum field theory, see e.g. \cite{Man'ko,Rego2,Rego3,AGI1,AGI1-2,AGI2,Avan2003,AG,gavrS,Rego1}.

Although a great variety of models of deformed oscillators exists as mentioned above, the detailed analysis of possible realizations, on their base, of composite particles along with the interpretation of deformation parameters in terms of the internal structure as far as we know is lacking. To fill this gap, in our preceding paper \cite{GKM-1} some steps in that direction were undertaken and first results were obtained.  Namely, we carried out the detailed analysis for quasibosons consisting of two ordinary fermions with the ansatz $A^{\dag}_{\alpha}=\Phi^{\mu\nu}_{\alpha}a^{\dag}_{\mu}b^{\dag}_{\nu}$ for the quasiboson creation operator in the $\alpha$-th mode, meaning the bilinear combination of the constituents' creation operators of the general form. The analysis implies the realization of quasibosons by deformed oscillators characterized by the most general DSF $\phi(N)$ which unambiguously determines \cite{Melja,Bona} the deformed algebra within one mode. Our present study further extends the results obtained in \cite{GKM-1} by using, instead of the usual fermions,  {\it their $q$-deformed analog} for the constituents' operators.

The paper is organized as follows. Section~\ref{sec2}, which serves as the base for our analysis, concerns the case of quasibosons whose constituents are ordinary fermions (the particular $q=1$ case of $q$-fermions). Here, after introducing the creation and annihilation operators for composite quasibosons, we recapitulate main facts and results from \cite{GKM-1} (note that some of these results, only sketched in \cite{GKM-1}, here are presented in full detail: in particular, that concerns the extended treatment given in subsection~\ref{sec5}). We establish important relations for
quasibosons' operators that include necessary conditions for the representation of quasibosons in terms of deformed bosons to hold. Those conditions are partially solved in subsection~\ref{sec4} yielding the DSFs~$\phi(N)$ of the effective deformation, and completely solved in subsection~\ref{sec5}. There we obtain explicitly all possible internal structures for quasibosons with the corresponding matrices $\Phi_{\alpha}^{\mu\nu}$. In Section~\ref{sec7}, presenting the further development of the ideas and results of \cite{GKM-1}, for the constituents' operators we take instead of usual fermions their $q$-deformed analogs. The corresponding treatment is performed: the admissible (for the realization under question) structure function~$\phi(n)$ and matrices $\Phi_{\alpha}^{\mu\nu}$ are found as the solution of the necessary conditions for the validity of the realization. Simpler illustrative examples, along with intermediate proofs, are relegated to appendices. The paper ends with concluding remarks and some outlook.

\section{System of quasibosons composed of two fermions}\label{sec2}

The general task of representing the quasibosons consisting of $q$-fermions can be divided into two particular situations: i) the constituents are pure fermions ($q=1$); ii) the constituents are essentially deformed $q$-fermions ($q\neq 1$). This section is devoted to the first case: similar to~\cite{GKM-1} we deal with the system of composite boson-like particles ({\it quasibosons} \cite{Per}) such that each copy (mode) of them is built from two fermions. We study the realization of quasibosons in terms of the set of {\it independent} identical copies of deformed oscillators of the general form (for some examples of mode-independent systems  see~\cite{Jag-etal}).

Let us denote the creation and annihilation operators of the two (mutually anticommuting) sets of usual fermions by $a^{\dag}_{\mu}$,
$b^{\dag}_{\nu}$, $a_{\mu}$, $b_{\nu}$ respectively, with their standard anticommutation relations, namely
\begin{equation}
\eqalign{
\{a_{\mu},a^{\dag}_{\mu'}\}\equiv
a_{\mu}a^{\dag}_{\mu'}+a^{\dag}_{\mu'}a_{\mu}=\delta_{\mu\mu'},\qquad&\{a_{\mu},a_{\nu}\}=0,\cr
\{b_{\nu},b^{\dag}_{\nu'}\}\equiv
b_{\nu}b^{\dag}_{\nu'}+b^{\dag}_{\nu'}b_{\nu}=\delta_{\nu\nu'},\qquad&\{b_{\mu},b_{\nu}\}=0.
}
\end{equation}
Besides, each of $a^{\dag}_{\mu}$, $a_{\mu}$ anticommutes  with each of $b^{\dag}_{\nu}$,  $b_{\nu}$.  So, we use these fermions to construct quasibosons. Then, the corresponding quasibosonic creation and annihilation operators $A^{\dag}_{\alpha},\ A_{\alpha}$ (where $\alpha$ labels the particular quasiboson and denotes the whole set of its quantum numbers) are given as
\begin{equation} \label{anzats}
A^{\dag}_{\alpha}=\sum\limits_{\mu\nu}\Phi^{\mu\nu}_{\alpha}a^{\dag}_{\mu}b^{\dag}_{\nu},\quad
A_{\alpha}=\sum\limits_{\mu\nu}\overline{\Phi}^{\mu\nu}_{\alpha}b_{\nu}a_{\mu} \,.
\end{equation}
For the matrices $\Phi_{\alpha}$ we assume the following normalization condition:
\[
\sum\limits_{\mu\nu}\Phi^{\mu\nu}_{\alpha}\overline{\Phi}^{\mu\nu}_{\beta}
\  \equiv \Tr \Phi_{\alpha}\Phi^{\dag}_{\beta}=\delta_{\alpha\beta}
 \,.
\]
One can easily check that
\begin{equation}
[A_{\alpha},A_{\beta}]=[A^{\dag}_{\alpha},A^{\dag}_{\beta}]=0 .
\label{2_2}
\end{equation}
For the remaining commutator one finds~\cite{Avan}
\begin{equation}    \label{2_3}
[A_{\alpha},A^{\dag}_{\beta}] =
\sum\limits_{\mu\nu\mu'\nu'}\overline{\Phi}^{\mu\nu}_{\alpha}\Phi^{\mu'\nu'}_{\beta}
\left([a_{\mu},a^{\dag}_{\mu'}]b_{\nu}b^{\dag}_{\nu'} +
a^{\dag}_{\mu'}a_{\mu} [b_{\nu},b^{\dag}_{\nu'}]\right)
  = \delta_{\alpha\beta} - \Delta_{\alpha\beta}
\end{equation}
where
\begin{equation*}
\Delta_{\alpha\beta} \equiv
\sum\limits_{\mu\nu\mu'}\overline{\Phi}^{\mu\nu}_{\alpha}
\Phi^{\mu'\nu}_{\beta}a^{\dag}_{\mu'}a_{\mu}
+\sum\limits_{\mu\nu\nu'}\overline{\Phi}^{\mu\nu}_{\alpha}
\Phi^{\mu\nu'}_{\beta}b^{\dag}_{\nu'}b_{\nu}.
\end{equation*}
The entity $\Delta_{\alpha\beta}$ in (\ref{2_3}) shows  deviation from the pure bosonic canonical relation. Note that if $\Delta_{\alpha\beta}=0$ then we have ${\Phi}^{\mu\nu}_{\alpha}=0$.

Remark that unlike the realization of quasibosonic operators using the quonic version of the deformed oscillator algebra, as was done in~\cite{Avan}, in all our analysis we consider (the set of) completely independent copies of deformed oscillators. That is, we assume the validity of (\ref{2_2}) and also require $[A_{\alpha},A^{\dag}_{\beta}]=0$ for $\alpha\neq\beta$.

The most simple type of deformed oscillator is the Arik-Coon $q$-deformation \cite{Arik}.  So it is of interest, first, to try to use this set of $q$-deformed bosons for representing the system of independent quasibosons. However, as was shown in \cite{GKM-1}, the representation of quasibosons with the {\it independent}  system of $q$-deformed bosons of the Arik-Coon type leads to inconsistency. For that reason we set the goal to examine other deformed oscillators in the general form given by their structure function~$\phi(N)$.

\paragraph{Necessary conventions.}

Our goal is to operate with $A_{\alpha}$, $A^{\dag}_{\alpha}$ and $N_{\alpha}$ constructed from $a^{\dag}_{\mu},a_{\mu},b^{\dag}_{\nu},b_{\nu}$ ($N_{\alpha}$ is some effective number operator for composite particles) as with the elements (operators) of some deformed oscillator algebra,
``forgetting'' about their internal structure. It means that we are looking for subalgebras of the enveloping algebra
$\mathfrak{A}\{A_{\alpha},A^{\dag}_{\alpha},N_{\alpha}\}$, generated by $A_{\alpha}$, $A^{\dag}_{\alpha}$, $N_{\alpha}$, isomorphic to
some deformed oscillator algebras $\mathfrak{A}\{\mathcal{A}_{\alpha},\mathcal{A}^{\dag}_{\alpha},\mathcal{N}_{\alpha}\}$, generated by $\mathcal{A}_{\alpha}$, $\mathcal{A}^{\dag}_{\alpha}$, $\mathcal{N}_{\alpha}$:
\[
\mathfrak{A}\{A_{\alpha},A^{\dag}_{\alpha},N_{\alpha}\}\simeq
\mathfrak{A}\{\mathcal{A}_{\alpha},\mathcal{A}^{\dag}_{\alpha},\mathcal{N}_{\alpha}\}.
\]
We will establish necessary and sufficient conditions for the existence of such isomorphism. We also require the isomorphism of representation spaces of the mentioned algebras:
\begin{equation}
L\{(a^{\dag}_{\mu})^r(b^{\dag}_{\nu})^s\ldots|O\rangle\}\supset H\simeq
\mathcal{H}=L\{\mathcal{A}^{\dag}_{\gamma_1}\ldots\mathcal{A}^{\dag}_{\gamma_n}|O\rangle\},
\end{equation}
where $L\{...\}$ denotes a linear span.  Thus, if the algebra of deformed oscillator operators is given by the relations
\begin{eqnarray}
G_i(\mathcal{A}_{\alpha},\mathcal{A}^{\dag}_{\alpha},\mathcal{N}_{\alpha})=0
\quad\Leftrightarrow\quad
G_i(\mathcal{A}_{\alpha},\mathcal{A}^{\dag}_{\alpha},
\mathcal{N}_{\alpha})\mathcal{A}^{\dag}_{\gamma_1}\ldots
\mathcal{A}^{\dag}_{\gamma_n}|O\rangle=0, \\
\nonumber  \hspace{60mm} n=0,1,2,...\label{rels}
\end{eqnarray}
then necessary and sufficient conditions for  the isomorphism to exist can be written as
\begin{equation}\label{isom_cond}
G_i(A_{\alpha},A^{\dag}_{\alpha},N_{\alpha})\cong0
\quad\mathop{\Longleftrightarrow}^{def}\quad
G_i(A_{\alpha},A^{\dag}_{\alpha},N_{\alpha})
A^{\dag}_{\gamma_1}\ldots A^{\dag}_{\gamma_n}|O\rangle=0.
\end{equation}
Here the symbol of the weak equality $\cong$ is introduced which means the equality on all the $n$-(quasi)boson states.  Next, we observe that
\begin{equation}\nonumber
G_i A^{\dag}_{\gamma_1}|O\rangle=0 \quad\Leftrightarrow\quad
[G_i,A^{\dag}_{\gamma_1}]|O\rangle=0
\end{equation}
and, by induction,
\begin{equation} \nonumber
G_i A^{\dag}_{\gamma_1}...A^{\dag}_{\gamma_n}|O\rangle=0
\quad\Leftrightarrow\quad
[...[G_i,A^{\dag}_{\gamma_1}]...,A^{\dag}_{\gamma_n}]|O\rangle=0.
\end{equation}
For a general deformed oscillator  defined using the structure function $\phi(N)$, see e.g. \cite{Bona}, relation (\ref{rels}) takes the form
\begin{equation}\label{system1_0}
\left\{ \eqalign{ \mathcal{A}^{\dag}_{\alpha}\mathcal{A}_{\alpha} =
\phi(\mathcal{N}_{\alpha}),\cr
[\mathcal{A}_{\alpha},\mathcal{A}^{\dag}_{\alpha}]  =
\phi(\mathcal{N}_{\alpha}+1)-\phi(\mathcal{N}_{\alpha}),\cr
[\mathcal{N}_{\alpha},\mathcal{A}^{\dag}_{\alpha}] =
\mathcal{A}^{\dag}_{\alpha},\, \ \ \
[\mathcal{N}_{\alpha},\mathcal{A}_{\alpha}] = -\mathcal{A}_{\alpha}.
} \right.
\end{equation}
Here the expressions  for $[\mathcal{A}_{\alpha},\mathcal{A}^{\dag}_{\beta}],\ \alpha\neq\beta$, if any, may be added. Thus, the set of functions $G_i$ applicable in this case reads as follows:
\begin{eqnarray*}
G_0(A_{\alpha},A^{\dag}_{\alpha},N_{\alpha}) =
A^{\dag}_{\alpha}A_{\alpha} -
\phi(N_{\alpha}),\\
 G_1(A_{\alpha},A^{\dag}_{\alpha},N_{\alpha}) =
 [A_{\alpha},A^{\dag}_{\alpha}] - \bigl(\phi(N_{\alpha}+1)-\phi(N_{\alpha})\bigr),\\
G_2(A^{\dag}_{\alpha},N_{\alpha}) =
[N_{\alpha},A^{\dag}_{\alpha}] - A^{\dag}_{\alpha}, \quad {\rm and\ possibly\ some\ others.}
\end{eqnarray*}
Such functions $G_i$ are determined  by the structure function of deformation $\phi(N_{\alpha})$. So, relations (\ref{isom_cond}) can be used for deducing the connection between matrices $\Phi^{\mu\nu}_{\alpha}$, which determine the operators $A^{\dag}_{\alpha}$, and the DSF $\phi(N_{\alpha})$.

\subsection{Necessary conditions on $\Phi_{\alpha}^{\mu\nu}$ and
$\phi(n)$}\label{sec4}

In the subsequent analysis we study  the independent quasibosons' system realized by deformed oscillators without an indication of the
particular model of deformation. The aim of this section is to obtain necessary conditions for such realization in terms of the matrices $\Phi_{\alpha}$. Note that the results of this section are not sensitive to the form of the definition of $N_{\alpha}(\cdot)$ as a function of $A_{\alpha}$, $A^{\dag}_{\alpha}$.

Using relations (\ref{isom_cond})-(\ref{system1_0}) and taking into account the independence of modes, we arrive at the following weak equalities for the commutators:
\begin{equation}
\left\{
\eqalign{
[A_{\alpha},A^{\dag}_{\beta}]\cong 0 \quad{\rm for}\quad \alpha\neq\beta,\cr
[N_{\alpha},A^{\dag}_{\alpha}]\cong A^{\dag}_{\alpha},\quad [N_{\alpha},A_{\alpha}]\cong -A_{\alpha},\cr
[A_{\alpha},A^{\dag}_{\alpha}] \cong \phi(N_{\alpha}+1)-\phi(N_{\alpha}).
}
\right.\label{system1}
\end{equation}

\paragraph{Treatment of mode independence.}

From the first relation in (\ref{system1})  we derive the equivalent requirements of independence in terms of matrices $\Phi$:
\begin{equation}
\sum\limits_{\mu'\nu'}\left(\Phi^{\mu\nu'}_{\beta}
\overline{\Phi}^{\mu'\nu'}_{\alpha}\Phi^{\mu'\nu}_{\gamma}
+ \Phi^{\mu\nu'}_{\gamma}\overline{\Phi}^{\mu'\nu'}_{\alpha}\Phi^{\mu'\nu}_{\beta}\right)
=0,\quad \alpha\neq\beta,\label{uslnez}
\end{equation}
which can be rewritten in the matrix form
\begin{equation}
\Phi_{\beta}\Phi^{\dag}_{\alpha}\Phi_{\gamma}+
\Phi_{\gamma}\Phi^{\dag}_{\alpha}\Phi_{\beta}=0,\quad
\alpha\neq\beta.\label{req1}
\end{equation}

\paragraph{Conditions on $\Phi_{\alpha}^{\mu\nu}$ within one mode $\alpha$.}

Since $A^{\dag}_{\alpha}A_{\alpha}\cong\phi(N_{\alpha})$ and $A_{\alpha}A^{\dag}_{\alpha}\cong\phi(N_{\alpha}\!+1)$, we have
\begin{equation}  \nonumber
[A^{\dag}_{\alpha}A_{\alpha},A_{\alpha}A^{\dag}_{\alpha}]\cong 0 \qquad {\rm and} \qquad
[\Delta_{\alpha\alpha},N_{\alpha}]\cong 0.
\end{equation}
The first equality can equivalently be rewritten as
\[
[A^{\dag}_{\alpha}A_{\alpha},\Delta_{\alpha\alpha}]= [A^{\dag}_{\alpha}A_{\alpha},\sum\limits_{\mu\nu\mu'}\overline{\Phi}^{\mu\nu}_{\alpha}
\Phi^{\mu'\nu}_{\alpha}a^{\dag}_{\mu'}a_{\mu} +\sum\limits_{\mu\nu\nu'}\overline{\Phi}^{\mu\nu}_{\alpha}
\Phi^{\mu\nu'}_{\alpha}b^{\dag}_{\nu'}b_{\nu}]\cong 0.
\]
The calculation of this commutator gives
\begin{eqnarray}
[A^{\dag}_{\alpha}A_{\alpha},\Delta_{\alpha\alpha}]=
2A^{\dag}_{\alpha}\sum\limits_{\mu\nu}\left(\Psi_{\alpha}^{\dag}\right)^{\nu\mu}b_{\nu}a_{\mu}
-2\sum\limits_{\mu'\nu'}\Psi_{\alpha}^{\mu'\nu'}a^{\dag}_{\mu'}b^{\dag}_{\nu'}
A_{\alpha}\cong 0,   \label{comm2} \\
\nonumber
\Psi_{\alpha} \equiv \Phi_{\alpha}\Phi^{\dag}_{\alpha}\Phi_{\alpha}.
\end{eqnarray}
With the account of (\ref{anzats}) one can see: the validity of (\ref{comm2}) on the one-quasiboson state requires that the commutator with  the creation operator on the vacuum should be
\begin{eqnarray*}
\left[(\overline{\Psi}^{\mu\nu}_{\alpha}\Phi^{\mu'\nu'}_{\alpha} -
\overline{\Phi}^{\mu\nu}_{\alpha}\Psi^{\mu'\nu'}_{\alpha})
a^{\dag}_{\mu'}b^{\dag}_{\nu'}b_{\nu}a_{\mu},\Phi_{\alpha}^{\lambda\rho}
a^{\dag}_{\lambda}b^{\dag}_{\rho}\right]|O\rangle= \\
=\!\left(\!\Phi_{\alpha}^{\mu'\nu'}\!a^{\dag}_{\mu'}b^{\dag}_{\nu'} \!\cdot\!
\overline{\Psi}_{\alpha}^{\mu\nu}\Phi_{\alpha}^{\mu\nu}  \!-\!
\Phi_{\alpha}^{\mu'\nu'}\!a^{\dag}_{\mu'}b^{\dag}_{\nu'}
\Delta[\Psi,\Phi] \!-\!\Psi_{\alpha}^{\mu'\nu'}\!a^{\dag}_{\mu'}b^{\dag}_{\nu'} \!+\!
\Psi_{\alpha}^{\mu'\nu'}\!a^{\dag}_{\mu'}b^{\dag}_{\nu'} \!\cdot\!
\Delta_{\alpha\alpha}\!\right)\!|O\rangle \\
=\left(\Phi_{\alpha}^{\mu'\nu'}\cdot
\Tr(\Psi_{\alpha}^{\dag}\Phi_{\alpha}) -
\Psi_{\alpha}^{\mu'\nu'}\right)a^{\dag}_{\mu'}b^{\dag}_{\nu'}|O\rangle=0
\end{eqnarray*}
(the summation over repeated indices is meant). From this we obtain the requirement
\begin{equation}
\Phi_{\alpha}\Phi^{\dag}_{\alpha}\Phi_{\alpha} =
\Tr(\Phi^{\dag}_{\alpha}\Phi_{\alpha}\Phi^{\dag}_{\alpha}\Phi_{\alpha})\cdot
\Phi_{\alpha}\, , \label{req2}
\end{equation}
which is also the sufficient one. This requirement guarantees not only the weak equality as in (\ref{comm2}) but also the corresponding strong (operator) equality.

Thus, we have two independent  requirements (\ref{req1}) and (\ref{req2}) for the matrices $\Phi_{\alpha}.$

\paragraph{Relating $\Phi_{\alpha}$ to the structure function $\phi(n)$.}

Let us derive the relations that involve the DSF $\phi$. Directly from the system (\ref{system1}) we obtain the initial values for the DSF $\phi$:
\begin{eqnarray*}
\phi(N_{\alpha})\cong A^{\dag}_{\alpha}A_{\alpha} \quad\quad\ \ \Rightarrow\quad\phi(0)=0,\qquad\qquad\\
\phi(N_{\alpha}+1)\cong A_{\alpha}A^{\dag}_{\alpha} \quad\Rightarrow\quad\phi(1)=1.\qquad\qquad
\end{eqnarray*}
From (\ref{2_3}) and the third relation in (\ref{system1}) we have
\begin{equation*}
[A_{\alpha},A^{\dag}_{\alpha}] = 1-\Delta_{\alpha\alpha} \cong
\phi(N_{\alpha}+1)-\phi(N_{\alpha}),
\end{equation*}
or, equivalently,
\begin{equation*}
F_{\alpha\alpha} \equiv \Delta_{\alpha\alpha} - 1 +
\phi(N_{\alpha}+1)-\phi(N_{\alpha}) \cong 0.
\end{equation*}
If the conditions (see (\ref{system1}))
\begin{equation}
[N_{\alpha},A^{\dag}_{\alpha}]\cong
A^{\dag}_{\alpha},\quad [N_{\alpha},A_{\alpha}]\cong
-A_{\alpha}\label{n_commut}
\end{equation}
do hold (it means that for these relations a subsequent verification is needed), then
\begin{equation*}
\phi(N_{\alpha})A^{\dag}_{\alpha}\!\cong\! A^{\dag}_{\alpha}
\phi(N_{\alpha}\!+\!1)\ \ \Rightarrow \ \
[\phi(N_{\alpha}),A^{\dag}_{\alpha}]\!\cong\! A^{\dag}_{\alpha}
\bigl(\phi(N_{\alpha}\!+\!1)\!-\!\phi(N_{\alpha})\bigr).
\end{equation*}
As a result, we come to
\begin{equation}\label{5_5}
[F_{\alpha\alpha},A^{\dag}_{\alpha}] \!\cong\!
2(\Phi_{\alpha}\Phi^{\dag}_{\alpha}
 \Phi_{\alpha})^{\mu\nu}a^{\dag}_{\mu}b^{\dag}_{\nu}\!+\!A^{\dag}_{\alpha}
 \Bigl(\phi(N_{\alpha}\!+\!2)\!-\!2\phi(N_{\alpha}\!+\!1)\!+\!
\phi(N_{\alpha})\Bigr).
\end{equation}
Requiring that this commutator vanishes  on the vacuum and taking into account that $\phi(0)=0$, $\phi(1)=1$ we obtain
\begin{equation*}
\Phi_{\alpha}\Phi^{\dag}_{\alpha}\Phi_{\alpha} =  \Bigl(1-\frac12
\phi(2)\Bigr)\Phi_{\alpha} = \frac{f}{2} \Phi_{\alpha}
\end{equation*}
where the deformation parameter $f$ does appear:
\[
\frac{f}{2}\equiv 1-\frac12 \phi(2)=
\Tr(\Phi^{\dag}_{\alpha}\Phi_{\alpha}\Phi^{\dag}_{\alpha}\Phi_{\alpha})
\ \ \ {\rm for\ all}\  \alpha.
\]

\paragraph{Finding admissible $\phi(n)$ explicitly.}

Equality (\ref{5_5}) can be rewritten as
\begin{equation*}
[F_{\alpha\alpha},A^{\dag}_{\alpha}] \cong
\bigl(2-\phi(2)\bigr)A^{\dag}_{\alpha} + A^{\dag}_{\alpha}
\bigl(\phi(N_{\alpha}+2) - 2\phi(N_{\alpha}+1) +
\phi(N_{\alpha})\bigr).
\end{equation*}
By induction, the equality for the $n$-th commutator can be proven:
\begin{equation*}
[\ldots[F_{\alpha\alpha},A^{\dag}_{\alpha}]\ldots A^{\dag}_{\alpha}]
\cong (A^{\dag}_{\alpha})^n \biggl\{\sum\limits_{k=0}^{n+1}
(-1)^{n+1-k} C^k_{n+1}\phi(N_{\alpha}+k) \biggr\}
\end{equation*}
(here  $C_n^k$ denotes binomial coefficients). The requirement that the $n$-th commutator vanishes on the vacuum leads to the recurrence relation
\begin{equation}
\phi(n+1) = \sum\limits_{k=0}^{n} (-1)^{n-k} C^k_{n+1}\phi(k), \quad n\geq
2.\label{recurr1}
\end{equation}
As can be seen, all the values $\phi(n)$ for $n\geq 3$ are  determined unambiguously by the two values $\phi(1)$ and $\phi(2)$, which may in general depend on one or more deformation parameters. Taking into account the equality \cite{Korn}
\begin{equation*}
\sum\limits_{k=0}^{n} (-1)^{n-k} k^m C_n^k  = \left\{
\eqalign{
0,\quad m<n,\cr
n!,\quad m = n,
}
\right.
\end{equation*}
we find: the only independent solutions of (\ref{recurr1}) are $n$ and $n^2$, as well as their linear combination
\begin{equation}
\phi(n)=\left(1+\frac{f}{2}\right)n - \frac{f}{2}n^2.\label{solution1}
\end{equation}
This structure function satisfies both the initial conditions and the recurrence relations in (\ref{recurr1}).
\begin{remark}\label{rem1}
In view of the uniqueness of the solution with fixed initial conditions, formula (\ref{solution1}) gives the general solution of
(\ref{recurr1}).
\end{remark}
\begin{remark}\label{rem2}
If we take the Hamiltonian in the form $H=\frac12 \bigl(\phi(N)+\phi(N+1)\bigr)$ then using the obtained results it is not difficult to derive the three-term recurrence relations for both the deformation structure function and energy eigenspectrum:
\begin{eqnarray*}
\phi(n+1)=\frac{2(n+1)}{n}\phi(n)-\frac{n+1}{n-1}\phi(n-1),\\
E_{n+1}=\frac{4n^2+4n-4}{2n^2-1}E_{n}-\frac{2n^2+4n+1}{2n^2-1}E_{n-1}.
\end{eqnarray*}
The latter equality has a typical form of the so-called quasi-Fibonacci \cite{GKR} relation for the eigenenergies. Note that the general case of deformed oscillators with polynomial structure functions $\phi(N)$ (these are quasi-Fibonacci as well) was studied in \cite{GR5}.
\end{remark}

\subsection{Treatment of the quasiboson number operator}

The quasiboson number operator $N_{\alpha}$ can be introduced in different ways. Its definition is dictated by the requirements $G_0\cong 0$, $G_1\cong 0$ (recall that $G_0$ and $G_1$ are defined just after (\ref{system1_0})) and also by the self-consistency of the realization. A possible definition could be given by the relation $N_{\alpha}\mathop{=}\limits^{def} \! \phi^{-1}\!(A^{\dag}_{\alpha}A_{\alpha})$, or by $N_{\alpha}\mathop{=}\limits^{def}\!\phi^{-1}\!(A_{\alpha}A^{\dag}_{\alpha})-\!1$. We will not choose some of the two forms of definition, but consider the general possibility:
\begin{equation*}
N_{\alpha}\mathop{=}\limits^{def} \chi(A^{\dag}_{\alpha}A_{\alpha},\varepsilon_{\alpha}),\quad {\rm where}\quad \varepsilon_{\alpha}\equiv 1-\Delta_{\alpha\alpha}= [A_{\alpha},A^{\dag}_{\alpha}].
\end{equation*}

As we have mentioned above, it remains to satisfy relations (\ref{n_commut}), which enable to define the function $\chi$. Note that the second of them stems by conjugation from the first one,
\begin{equation} \label{usl2}
[N_{\alpha},A^{\dag}_{\alpha}]\cong
A^{\dag}_{\alpha}.
\end{equation}
Since we assume the {\it independence} of different modes, see (\ref{system1}), we consider the case $\gamma_1=\gamma_2=\ldots=\alpha$ in the definition~(\ref{isom_cond}).

It is useful to denote by $L_n$ the operators
\begin{equation}
L_0=N,\quad L_{n+1}=[L_n,A^{\dag}_{\alpha}]=
[\ldots[N_{\alpha},A^{\dag}_{\alpha}]\ldots A^{\dag}_{\alpha}], \ \
\ n\geq 0 \,. \label{5_22}
\end{equation}
Taking this into account, condition (\ref{usl2}) can be written as
\begin{equation}
L_1|O\rangle=A^{\dag}_{\alpha}|O\rangle,\quad
L_n|O\rangle=0,\quad n>1.\label{q3}
\end{equation}
Now consider three useful statements.
\begin{proposition}\label{prop1}
The following relations are true:
\begin{eqnarray*}
[\Delta_{\alpha\alpha},A^{\dag}_{\alpha}]=f A^{\dag}_{\alpha},\quad [\Delta_{\alpha\alpha},A_{\alpha}]=
-\overline{f}A_{\alpha},\quad f=
2\Tr(\Phi^{\dag}_{\alpha}\Phi_{\alpha}\Phi^{\dag}_{\alpha}\Phi_{\alpha}),\cr
[\varepsilon_{\alpha},A^{\dag}_{\alpha}]=-f A^{\dag}_{\alpha},\quad [\Delta_{\alpha\alpha},N_{\alpha}]\cong
0,\quad \ \ \Delta_{\alpha\alpha}=\Delta_{\alpha\alpha}^{\dag}.
\end{eqnarray*}
\end{proposition}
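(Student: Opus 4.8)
The plan is to prove each of the six relations in Proposition~\ref{prop1} by direct computation using the anticommutation relations of the constituent fermions together with the ansatz~(\ref{anzats}) and the explicit form of $\Delta_{\alpha\alpha}$ given after~(\ref{2_3}). First I would compute the key commutator $[\Delta_{\alpha\alpha},A^{\dag}_{\alpha}]$. Writing $\Delta_{\alpha\alpha}$ as the sum of an $a$-part $\sum\overline{\Phi}^{\mu\nu}_{\alpha}\Phi^{\mu'\nu}_{\alpha}a^{\dag}_{\mu'}a_{\mu}$ and a $b$-part $\sum\overline{\Phi}^{\mu\nu}_{\alpha}\Phi^{\mu\nu'}_{\alpha}b^{\dag}_{\nu'}b_{\nu}$, and using $A^{\dag}_{\alpha}=\sum\Phi^{\lambda\rho}_{\alpha}a^{\dag}_{\lambda}b^{\dag}_{\rho}$, the commutator of the bilinear $a^{\dag}_{\mu'}a_{\mu}$ with $a^{\dag}_{\lambda}$ is evaluated via $\{a_{\mu},a^{\dag}_{\lambda}\}=\delta_{\mu\lambda}$, yielding a single surviving contraction; similarly for the $b$-part. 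I expect both parts to contribute identically, so that after relabelling the indices each produces one factor of $\Psi_{\alpha}=\Phi_{\alpha}\Phi^{\dag}_{\alpha}\Phi_{\alpha}$ contracted into $a^{\dag}b^{\dag}$. The crucial input is the already-established requirement~(\ref{req2}), namely $\Phi_{\alpha}\Phi^{\dag}_{\alpha}\Phi_{\alpha}=\Tr(\Phi^{\dag}_{\alpha}\Phi_{\alpha}\Phi^{\dag}_{\alpha}\Phi_{\alpha})\Phi_{\alpha}$, which collapses $\Psi_{\alpha}$ back to a scalar multiple of $\Phi_{\alpha}$ and hence of $A^{\dag}_{\alpha}$; the scalar then assembles into $f=2\Tr(\Phi^{\dag}_{\alpha}\Phi_{\alpha}\Phi^{\dag}_{\alpha}\Phi_{\alpha})$.

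Next I would obtain the three immediate corollaries. The relation $[\Delta_{\alpha\alpha},A_{\alpha}]=-\overline{f}A_{\alpha}$ follows by taking the Hermitian conjugate of the first relation, using that $\Delta_{\alpha\alpha}$ is self-adjoint (so the conjugate of $[\Delta_{\alpha\alpha},A^{\dag}_{\alpha}]$ is $-[\Delta_{\alpha\alpha},A_{\alpha}]$) and noting $\overline{f}$ is the conjugate of $f$. The relation $[\varepsilon_{\alpha},A^{\dag}_{\alpha}]=-fA^{\dag}_{\alpha}$ is then read off directly from $\varepsilon_{\alpha}=1-\Delta_{\alpha\alpha}$, since the identity commutes with everything. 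The self-adjointness $\Delta_{\alpha\alpha}=\Delta_{\alpha\alpha}^{\dag}$ is verified termwise: taking the adjoint of $a^{\dag}_{\mu'}a_{\mu}$ gives $a^{\dag}_{\mu}a_{\mu'}$, and after swapping the dummy summation indices $\mu\leftrightarrow\mu'$ the coefficient $\overline{\Phi}^{\mu\nu}_{\alpha}\Phi^{\mu'\nu}_{\alpha}$ maps to its complex conjugate with indices exchanged, reproducing the original sum; the $b$-part works the same way. This self-adjointness is what justifies the conjugation step and also forces $f$ to be real, so that $\overline{f}=f$ and the two commutator relations become symmetric.

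Finally, for $[\Delta_{\alpha\alpha},N_{\alpha}]\cong 0$ I would invoke the weak-equality framework of~(\ref{isom_cond}). Since $N_{\alpha}$ is a function $\chi(A^{\dag}_{\alpha}A_{\alpha},\varepsilon_{\alpha})$ of $A^{\dag}_{\alpha}A_{\alpha}$ and $\varepsilon_{\alpha}$, and since on the $n$-quasiboson subspace the relations~(\ref{system1}) give $A^{\dag}_{\alpha}A_{\alpha}\cong\phi(N_{\alpha})$ while $\Delta_{\alpha\alpha}$ shifts the eigenvalue of $N_{\alpha}$ by the $\pm f$ ladder action just established, I expect $\Delta_{\alpha\alpha}$ and $N_{\alpha}$ to be simultaneously diagonal on these states, giving the weak equality; alternatively this can be derived as a consequence of the earlier observation $[\Delta_{\alpha\alpha},N_{\alpha}]\cong 0$ already noted in the text just before equation~(\ref{comm2}).

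I anticipate that the main obstacle is the bookkeeping in the first computation, $[\Delta_{\alpha\alpha},A^{\dag}_{\alpha}]$: the fermionic anticommutators introduce sign care when moving annihilation operators past the creation operators in $A^{\dag}_{\alpha}$, and one must correctly identify which of the several index contractions survive and verify that the $a$-part and $b$-part combine (rather than cancel) to produce the factor of $2$ in $f$. Once that commutator is pinned down with the correct coefficient, the remaining five relations are essentially formal consequences — conjugation, linearity, and the weak-equality reduction — requiring no further nontrivial input beyond~(\ref{req2}).
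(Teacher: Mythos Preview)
Your proposal is correct and is precisely what the paper means by ``proven straightforwardly'': direct computation of $[\Delta_{\alpha\alpha},A^{\dag}_{\alpha}]$ using the fermionic anticommutators gives $2(\Phi_{\alpha}\Phi^{\dag}_{\alpha}\Phi_{\alpha})^{\mu\nu}a^{\dag}_{\mu}b^{\dag}_{\nu}$, which collapses via~(\ref{req2}) to $fA^{\dag}_{\alpha}$, and the remaining relations follow by conjugation, the definition $\varepsilon_{\alpha}=1-\Delta_{\alpha\alpha}$, termwise self-adjointness, and the weak equality $[\Delta_{\alpha\alpha},N_{\alpha}]\cong 0$ already recorded before~(\ref{comm2}). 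Your one slightly muddled passage is the first argument for $[\Delta_{\alpha\alpha},N_{\alpha}]\cong 0$ via ``simultaneously diagonal'' (which would tacitly assume the relation $[N_{\alpha},A^{\dag}_{\alpha}]\cong A^{\dag}_{\alpha}$ that is only verified later), but your alternative of simply citing the earlier observation is the clean route and suffices.
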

\noindent
This statement is proven straightforwardly.
\begin{proposition}\label{prop2}
For each $n\geq 0$ we have the equalities:
\begin{eqnarray}
\left[(A^{\dag}_{\alpha}A_{\alpha})^n,A^{\dag}_{\alpha}\right]=A^{\dag}_{\alpha}\left[(A^{\dag}_{\alpha}A_{\alpha}+
\varepsilon_{\alpha})^n-(A^{\dag}_{\alpha}A_{\alpha})^n\right],\label{e1}\\
\left[\varepsilon_{\alpha}^n,A^{\dag}_{\alpha}\right]=A^{\dag}_{\alpha}[(-f+\varepsilon_{\alpha})^n-\varepsilon_{\alpha}^n]. \label{e2}
\end{eqnarray}
\end{proposition}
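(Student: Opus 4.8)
The plan is to reduce both identities to a single one-step ``push-through'' rule for $A^{\dag}_{\alpha}$ and then iterate by induction on $n$; throughout I abbreviate $M\equiv A^{\dag}_{\alpha}A_{\alpha}$. For (\ref{e1}) the base case $n=1$ follows at once from the definition $\varepsilon_{\alpha}=[A_{\alpha},A^{\dag}_{\alpha}]$, which gives $A_{\alpha}A^{\dag}_{\alpha}=M+\varepsilon_{\alpha}$ and hence $MA^{\dag}_{\alpha}=A^{\dag}_{\alpha}(A_{\alpha}A^{\dag}_{\alpha})=A^{\dag}_{\alpha}(M+\varepsilon_{\alpha})$. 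For (\ref{e2}) the corresponding base case is precisely the relation $[\varepsilon_{\alpha},A^{\dag}_{\alpha}]=-fA^{\dag}_{\alpha}$ established in Proposition~\ref{prop1}, which I rewrite as the shift $\varepsilon_{\alpha}A^{\dag}_{\alpha}=A^{\dag}_{\alpha}(\varepsilon_{\alpha}-f)$; here $f$ is a c-number and therefore central.

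Next I would prove, by induction on $n$, the two push-through formulas $M^nA^{\dag}_{\alpha}=A^{\dag}_{\alpha}(M+\varepsilon_{\alpha})^n$ and $\varepsilon_{\alpha}^nA^{\dag}_{\alpha}=A^{\dag}_{\alpha}(\varepsilon_{\alpha}-f)^n$. In each case the inductive step multiplies the one-step shift on the right by the already-assembled $(n-1)$-th power and invokes the hypothesis on the left, e.g.
\[
M^n A^{\dag}_{\alpha}=M\bigl(M^{n-1}A^{\dag}_{\alpha}\bigr)
=M A^{\dag}_{\alpha}(M+\varepsilon_{\alpha})^{n-1}
=A^{\dag}_{\alpha}(M+\varepsilon_{\alpha})^n ,
\]
and identically for the second formula. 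Subtracting $A^{\dag}_{\alpha}M^n$ and $A^{\dag}_{\alpha}\varepsilon_{\alpha}^n$ respectively then yields exactly (\ref{e1}) and (\ref{e2}), on noting that $(-f+\varepsilon_{\alpha})^n=(\varepsilon_{\alpha}-f)^n$. Since both base relations are strong (operator) equalities, the induction delivers the strong equalities asserted in the proposition, with no appeal to the weak equality $\cong$.

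The one point demanding care---and what I would flag as the main (mild) obstacle---is that $M$ and $\varepsilon_{\alpha}$ do not commute, so one must resist expanding $(M+\varepsilon_{\alpha})^n$ into ordered monomials. The induction stays clean precisely because the block $M+\varepsilon_{\alpha}$ is never broken apart: only the single-step shift $MA^{\dag}_{\alpha}=A^{\dag}_{\alpha}(M+\varepsilon_{\alpha})$ is ever applied, always multiplied on the right by a power already built up. The same discipline governs the second identity, where $\varepsilon_{\alpha}-f$ is likewise kept intact (there it is in fact harmless, as $f$ is central). Treating the composite operator as an indivisible block is what reduces the whole argument to a one-line induction in each case.
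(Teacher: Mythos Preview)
Your proof is correct and follows essentially the same inductive route as the paper's (Appendix~A): the paper expands the commutator via the Leibniz rule $[(A^\dag A)^{n+1},A^\dag]=[A^\dag A,A^\dag](A^\dag A)^n+A^\dag A[(A^\dag A)^n,A^\dag]$ and then substitutes the hypothesis, while you prove the equivalent push-through $M^nA^\dag_\alpha=A^\dag_\alpha(M+\varepsilon_\alpha)^n$ and subtract afterwards---the same argument in slightly different packaging. One small correction: in the setting where Proposition~\ref{prop1} applies (i.e.\ once condition~(\ref{req2}) holds), $M=A^\dag_\alpha A_\alpha$ and $\varepsilon_\alpha$ in fact \emph{do} commute exactly, as the paper notes just before Proposition~\ref{prop3}; your caution about not expanding $(M+\varepsilon_\alpha)^n$ is therefore unnecessary here, though it does no harm to the argument.
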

\noindent
Using the propositions~\ref{prop1} and~\ref{prop2} and the exact commuting of $A^{\dag}_{\alpha}A_{\alpha}$ and $\varepsilon_{\alpha}$ we come to
the following
\begin{proposition}\label{prop3}
For $N_{\alpha}$ defined as $N_{\alpha}=\chi(A^{\dag}_{\alpha}A_{\alpha},\varepsilon)$, and $n\geq 0$, there is the following equality for the $n$-fold commutator (\ref{5_22}):
\begin{eqnarray*}
L_n=(A^{\dag}_{\alpha})^n\chi(A^{\dag}_{\alpha}A_{\alpha}+n\varepsilon_{\alpha}-\sigma_n
f, \varepsilon_{\alpha} - n f) -
\sum\limits_{k=0}^{n-1}C_n^k (A^{\dag}_{\alpha})^{n-k}L_k,\\
\qquad\qquad\sigma_n=\frac{n(n-1)}{2}.
\end{eqnarray*}
\end{proposition}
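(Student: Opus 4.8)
The plan is to reduce the whole statement to a single functional commutation rule that extends Propositions~\ref{prop1} and~\ref{prop2} from monomials to an arbitrary function $\chi$ of the two \emph{commuting} operators $X\equiv A^{\dag}_{\alpha}A_{\alpha}$ and $\varepsilon\equiv\varepsilon_{\alpha}$, and then to combine it with a standard derivation identity. First I would record the one-step relations $X A^{\dag}_{\alpha}=A^{\dag}_{\alpha}(X+\varepsilon)$ and $\varepsilon A^{\dag}_{\alpha}=A^{\dag}_{\alpha}(\varepsilon-f)$, the former being immediate from $\varepsilon_{\alpha}=[A_{\alpha},A^{\dag}_{\alpha}]$ and the latter being the relation $[\varepsilon_{\alpha},A^{\dag}_{\alpha}]=-fA^{\dag}_{\alpha}$ of Proposition~\ref{prop1}. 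Because $X$ and $\varepsilon$ commute exactly, these two substitutions are mutually consistent and extend by linearity over all monomials $X^{a}\varepsilon^{b}$, hence to any polynomial or power series, giving
$$g(X,\varepsilon)\,A^{\dag}_{\alpha}=A^{\dag}_{\alpha}\,g(X+\varepsilon,\varepsilon-f).$$
Equations (\ref{e1}) and (\ref{e2}) are precisely the special cases $g=X^{n}$ and $g=\varepsilon^{n}$, so this is their common generalization.

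Next I would iterate this rule through the string $(A^{\dag}_{\alpha})^{n}$. Setting $\chi_{m}\equiv\chi(X+m\varepsilon-\sigma_{m}f,\,\varepsilon-mf)$, the lemma applied to $g=\chi_{m}$ (regarded as a function of $X,\varepsilon$) gives $\chi_{m}A^{\dag}_{\alpha}=A^{\dag}_{\alpha}\chi_{m+1}$, because the substitution $X\mapsto X+\varepsilon$, $\varepsilon\mapsto\varepsilon-f$ turns its first argument $X+m\varepsilon-\sigma_{m}f$ into $X+(m+1)\varepsilon-(\sigma_{m}+m)f$ and its second into $\varepsilon-(m+1)f$, while $\sigma_{m}+m=\sigma_{m+1}$. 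Telescoping from $\chi_{0}=N_{\alpha}$ then yields the key intermediate identity
$$N_{\alpha}(A^{\dag}_{\alpha})^{n}=(A^{\dag}_{\alpha})^{n}\,\chi(A^{\dag}_{\alpha}A_{\alpha}+n\varepsilon_{\alpha}-\sigma_{n}f,\,\varepsilon_{\alpha}-nf),$$
which is exactly the leading term of Proposition~\ref{prop3}.

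Finally I would invoke the purely algebraic fact that, for the derivation $D(\cdot)=[\,\cdot\,,A^{\dag}_{\alpha}]$ with iterates $L_{k}=D^{k}(N_{\alpha})$, right multiplication by $A^{\dag}_{\alpha}$ is the sum of the two commuting operations ``left multiplication by $A^{\dag}_{\alpha}$'' and $D$; the binomial theorem for these then gives $N_{\alpha}(A^{\dag}_{\alpha})^{n}=\sum_{k=0}^{n}C_{n}^{k}(A^{\dag}_{\alpha})^{n-k}L_{k}$ (equivalently a one-line induction on $n$). Equating this with the intermediate identity above and solving for the $k=n$ term isolates $L_{n}$ and reproduces the claimed formula, the remaining terms assembling into the subtraction $\sum_{k=0}^{n-1}C_{n}^{k}(A^{\dag}_{\alpha})^{n-k}L_{k}$.

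I expect the only genuinely delicate point to be the functional lemma together with the bookkeeping of the quadratic term $\sigma_{n}=\frac{n(n-1)}{2}$. One must keep the two substitutions $X\mapsto X+\varepsilon$ and $\varepsilon\mapsto\varepsilon-f$ applied to the same unshifted operators at each step, which is legitimate only thanks to $[A^{\dag}_{\alpha}A_{\alpha},\varepsilon_{\alpha}]=0$, and one must check that the successive $-mf$ shifts accumulate as the arithmetic sum $0+1+\dots+(n-1)=\sigma_{n}$ rather than linearly in $n$. The reduction to monomials and the binomial identity are then routine.
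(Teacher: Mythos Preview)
Your proof is correct. It relies on the same core functional shift rule as the paper's argument---namely that for any function $g$ of the commuting pair $(X,\varepsilon)=(A^{\dag}_{\alpha}A_{\alpha},\varepsilon_{\alpha})$ one has $g(X,\varepsilon)A^{\dag}_{\alpha}=A^{\dag}_{\alpha}g(X+\varepsilon,\varepsilon-f)$---but organizes the remainder differently. The paper proceeds by a direct induction on $n$: it assumes the formula for $L_{n}$, computes $L_{n+1}=[L_{n},A^{\dag}_{\alpha}]$ term by term (applying the shift rule to the leading $\chi$-piece and Leibniz to the sum), and then reassembles the binomial coefficients via $C_{n}^{k}+C_{n}^{k-1}=C_{n+1}^{k}$. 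You instead decouple the two ingredients: first iterate the shift rule to obtain the closed form $N_{\alpha}(A^{\dag}_{\alpha})^{n}=(A^{\dag}_{\alpha})^{n}\chi_{n}$, then invoke the purely algebraic binomial identity $N_{\alpha}(A^{\dag}_{\alpha})^{n}=\sum_{k=0}^{n}C_{n}^{k}(A^{\dag}_{\alpha})^{n-k}L_{k}$ (valid because left multiplication by $A^{\dag}_{\alpha}$ commutes with the derivation $D=[\,\cdot\,,A^{\dag}_{\alpha}]$), and subtract. Your route is slightly cleaner in that it isolates the structural reason for the binomial coefficients rather than recovering them through Pascal's rule inside an induction; the paper's route is more hands-on but requires no auxiliary operator-theoretic observation about commuting superoperators. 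The $\sigma_{n}$ bookkeeping you flag is indeed the only subtle point, and your verification $\sigma_{m}+m=\sigma_{m+1}$ handles it correctly.
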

\noindent The proofs of propositions \ref{prop2} and \ref{prop3} are given in Appendices A and B.

Then conditions (\ref{q3}) turn into equalities
\begin{equation*}
\left\{ \eqalign{ A^{\dag}_{\alpha}\chi(A^{\dag}_{\alpha}A_{\alpha}+
\varepsilon_{\alpha},\varepsilon_{\alpha} - f)|O\rangle=
A^{\dag}_{\alpha}|O\rangle,\cr
(A^{\dag}_{\alpha})^n\chi(A^{\dag}_{\alpha}A+
n\varepsilon_{\alpha}-\sigma_n f, \varepsilon_{\alpha} - n
f)|O\rangle =\cr \qquad\qquad =C_n^1
(A^{\dag}_{\alpha})^{n-1}L_1|O\rangle =
 n(A^{\dag}_{\alpha})^{n}|O\rangle,\quad n>1.
}
\right.
\end{equation*}
To satisfy these,  it is necessary that
\begin{equation}\label{def_req}
\chi(n-\sigma_n f,1-n f) = n,\quad n\geq 1.
\end{equation}
So, the condition (\ref{def_req}) guarantees the  validity of commutation relations (\ref{n_commut}), and therefore the consistency of the whole representation of quasibosons by deformed bosons. As one can see, the both definitions $N_{\alpha}\mathop{=}\limits^{def}
\phi^{-1}(A^{\dag}_{\alpha}A_{\alpha})$ and $N_{\alpha}\mathop{=}\limits^{def}\phi^{-1}(A_{\alpha}A^{\dag}_{\alpha})-1$ satisfy (\ref{def_req}). Also, there are other definitions like $N_{\alpha}\mathop{=}\limits^{def} (1-p) \phi^{-1}(A^{\dag}_{\alpha}A_{\alpha}) + p (\phi^{-1}(A_{\alpha}A^{\dag}_{\alpha})-1)$, $0<p<1$, which satisfy (\ref{def_req}) and lead, as can be checked, to the self-consistent representation of quasibosons.

\subsection{General solution for matrices
$\Phi_{\alpha}$}\label{sec5}

In this subsection we describe how to find admissible $d_a\times d_b$ matrices $\Phi_{\alpha}$. These should satisfy the system
\begin{equation} \label{system2}
\left\{
\eqalign{
\Tr(\Phi_{\alpha}\Phi_{\beta}^{\dag})=\delta_{\alpha\beta},\cr
\Phi_{\alpha}\Phi^{\dag}_{\alpha}\Phi_{\alpha}=\frac{f}{2}\Phi_{\alpha},\cr
\Phi_{\beta}\Phi^{\dag}_{\alpha}\Phi_{\gamma}+\Phi_{\gamma}\Phi^{\dag}_{\alpha}\Phi_{\beta}=0.
}
\right.
\end{equation}
{\it Consider first the case  $f\ne 0$}. If the matrix $\Phi_\alpha$ is nondegenerate (that means  $d_a=d_b$ and $\det \Phi_{\alpha}\neq 0$) for some $\alpha$, the second relation yields $\Phi_{\alpha}\Phi^{\dag}_{\alpha}=\frac{f}{2}\mathds{1}$. From the third relation at $\gamma=\alpha$ we  obtain: $\ \Phi_{\beta}=0,\quad\forall\beta\neq\alpha. $ Then it follows that only one value of $\alpha$ is possible for which $\det \Phi_{\alpha}\neq 0$. In that case $\Phi_{\alpha}$ is an arbitrary unitary matrix. All the rest $\Phi_{\beta}=0,\ \beta\neq\alpha$. That gives the partial nondegenerate solution of the system. Note that other solutions will be degenerate for all~$\alpha$.

Let us go over to the analysis of degenerate solutions. At $\gamma=\alpha$ the last equation in (\ref{system2})  reduces  to $\Phi_{\beta} \Phi^{\dag}_{\alpha} \Phi_{\alpha} + \Phi_{\alpha} \Phi^{\dag}_{\alpha}\Phi_{\beta}=0$; multiplying it by $\Phi_{\alpha}^{\dag}$ and utilizing the second relation (note that $f$ is real) we infer
\begin{equation}\label{aa4}
K\Phi_{\beta}\Phi_{\alpha}^{\dag} \equiv
\left(\Phi_{\alpha}\Phi^{\dag}_{\alpha}+\frac{f}{2}\mathds{1}\right)\Phi_{\beta}\Phi_{\alpha}^{\dag}=0,
\quad \ K\equiv
\Phi_{\alpha}\Phi^{\dag}_{\alpha}+\frac{f}{2}\mathds{1}.
\end{equation}
From the second relation of the system (\ref{system2}) we also obtain:
\begin{equation*}
\forall x\in \Im\Phi_{\alpha}:\quad \Phi_{\alpha}\Phi^{\dag}_{\alpha}x=\frac{f}{2}x\quad\Rightarrow\quad
\dim \Im\Phi_{\alpha}\Phi^{\dag}_{\alpha}\geq \dim \Im\Phi_{\alpha}.
\end{equation*}
Taking into account the latter and the fact that $\Im\Phi_{\alpha}\Phi^{\dag}_{\alpha}\subseteq\Im\Phi_{\alpha}$ we find
\begin{equation}\label{aa3}
\Im\Phi_{\alpha}\Phi^{\dag}_{\alpha}=\Im\Phi_{\alpha}.
\end{equation}
Applying the Fredholm theorem first to $\Phi_{\alpha}$ and  then to $\Phi_{\alpha}\Phi^{\dag}_{\alpha}$ and using (\ref{aa3}) we arrive at the decompositions
\begin{eqnarray*}
\forall \alpha:\quad \mathds{C}^{d_a}=\Im\Phi_{\alpha}\oplus\Ker\Phi^{\dag}_{\alpha}=
\Im\Phi_{\alpha}\Phi^{\dag}_{\alpha}\oplus\Ker\Phi_{\alpha}\Phi^{\dag}_{\alpha},\\
\qquad\quad \mathds{C}^{d_a}=\Im\Phi_{\alpha}\oplus\Ker\,\Phi_{\alpha}\Phi^{\dag}_{\alpha}.
\end{eqnarray*}
On each of subspaces $\Im\Phi_{\alpha}$ and $\Ker\,\Phi_{\alpha}\Phi^{\dag}_{\alpha}$, which are eigenspaces for $K$, the operator $K$ is nondegenerate:
\begin{equation*}
\forall x\in \Im\Phi_{\alpha}:\quad Kx=fx,\quad{\rm and}\quad
\forall y\in\Ker\Phi_{\alpha}\Phi^{\dag}_{\alpha}:\quad
Ky=\frac{f}{2}y.
\end{equation*}
Consequently, the operator $K$ is nondegenerate on the whole $\mathds{C}^{d_a}$. Using (\ref{aa4}) we find
\begin{equation*}
\forall \alpha\neq\beta:\quad
\Phi_{\beta}\Phi_{\alpha}^{\dag}=0\quad {\rm or} \quad
\Phi_{\alpha}\Phi_{\beta}^{\dag}=0.
\end{equation*}
As a result, we arrive at the system which is equivalent to the initial one (\ref{system2}) and to the respective (for each of the
equations) implications ($\alpha\neq \beta$):
\begin{equation*}
\left\{ \eqalign{ \Tr(\Phi_{\alpha}\Phi^{\dag}_{\alpha})\!=\!1,\cr
\Phi_{\alpha}\Phi^{\dag}_{\alpha}\!\cdot\!\Phi_{\alpha}\!=\!(f/2)\!\cdot\!\Phi_{\alpha},\cr
\Phi_{\alpha}\Phi^{\dag}_{\alpha}\!\cdot\!\Phi_{\beta}\!=\!0,\cr
\Phi_{\alpha}\Phi_{\beta}^{\dag}\!=\!0. } \right. \eqalign{
\Rightarrow\ \ \dim\Im\Phi_{\alpha}\Phi^{\dag}_{\alpha}\!=\!{\rm
rank}\, \Phi_{\alpha}\!=\!2/f\equiv m,\,\ {\rm for\ all}\ \alpha,\cr
\Rightarrow\ \ \Im\Phi_{\alpha}{\rm\ -\ eigensubspace\
of}\,\Phi_{\alpha}\Phi^{\dag}_{\alpha},\cr \Rightarrow\ \
\forall\beta\neq\alpha\, \Im\Phi_{\beta}\subset
\Ker\Phi_{\alpha}\Phi^{\dag}_{\alpha}=\Ker\Phi_{\alpha}^{\dag},\cr
\Rightarrow\ \ \Im\Phi^{\dag}_{\beta}\subset \Ker\Phi_{\alpha}. }
\end{equation*}
So, the deformation parameter $f$ has a discrete range of values determined by $m$:
\begin{equation*}
f=\frac2m.
\end{equation*}
The set of the solutions depends on the relation between $\sum_{\alpha}m$ and $\min(d_a,d_b)$. If $\sum_{\alpha}m>\min(d_a,d_b)$, the set of solutions is empty. If $\sum_{\alpha}m\leq\min(d_a,d_b)$, then, according to the relations
\begin{equation*}
\mathds{C}^{d_a}=\Im\Phi_{\alpha}\oplus\Ker\Phi^{\dag}_{\alpha},\quad
\Im\Phi_{\beta}\subset
\Ker\Phi^{\dag}_{\alpha},\quad\forall\beta\neq\alpha,
\end{equation*}
the space $\mathds{C}^{d_a}$ ($\mathds{C}^{d_b}$) decomposes into the direct sum of linearly independent subspaces:
\begin{eqnarray*}
\mathds{C}^{d_a}=\biggl(\bigoplus\limits_{\alpha}
\Im\Phi_{\alpha}\biggr)\oplus R,\quad \dim
R=n-\sum\limits_{\alpha}m,\quad\Phi^{\dag}_{\alpha}R=0;\\
\mathds{C}^{d_b}=\biggl(\bigoplus\limits_{\alpha}
\Im\Phi^{\dag}_{\alpha}\biggr)\oplus \tilde{R},\quad \dim
\tilde{R}=n-\sum\limits_{\alpha}m, \quad \Phi_{\alpha}\tilde{R}=0.
\end{eqnarray*}
Let $\{e_{1\alpha},\ldots,e_{m\alpha}\}$ be the  orthonormal basis in the space $\Im\Phi_{\alpha}$, and  $U_1(d_a)$ be the corresponding transition matrix to these bases from the initial one of $\mathds{C}^{d_a}$. Likewise, let $\{f_{1\alpha},\ldots,f_{m\alpha}\}$ be the orthonormal basis in the
space $\Im\Phi^{\dag}_{\alpha}$, and $U_2(d_b)$ the corresponding transition matrix from the initial basis in $\mathds{C}^{d_b}$. In the new bases, the transition matrix $\Phi_{\alpha}$ is block-diagonal:
\begin{equation*}
U^{\dag}_1(d_a)\Phi_{\alpha}U_2(d_b)= \left(
  \begin{array}{ccc}
    0 & 0 & 0  \\
    0 & \ \tilde{\Phi}_\alpha & 0  \\
    0 & 0 & 0  \\
  \end{array}
\right)  .
\end{equation*}
The $m\times m$ matrix $\tilde{\Phi}_{\alpha}$   satisfies the equation $\tilde{\Phi}_{\alpha}\tilde{\Phi}^{\dag}_{\alpha}=\frac{f}{2} \mathds{1}_m$. Its general solution can be given through the unitary matrix: $\tilde{\Phi}_{\alpha}=\sqrt{f/2}\ U_{\alpha}(m)$. Thus the general solution of the initial system (\ref{system2}) is given in the form
\begin{equation}
\Phi_{\alpha}= U_1(d_a)
\diag\biggl\{0,\sqrt{\frac{f}{2}}U_{\alpha}(m),0\biggr\}U^{\dag}_2(d_b).\label{gen_solution}
\end{equation}
In this formula, for every matrix $\Phi_{\alpha}$,  the block $\sqrt{\frac{f}{2}}U_{\alpha}(m)$ is at its $\alpha$-th place, and does not intersect with the corresponding block of any other matrix $\Phi_{\beta}$ with $\beta\neq \alpha$. To conclude: we have got all possible quasibosonic composite operators, expressed by (\ref{anzats}) and (\ref{gen_solution}), which can be realized by the algebra of deformed oscillators.

{\it The case $f=0$ in (\ref{system2})}.  It can be shown that $\Phi_{\alpha}$ should be zero for such $f$. This is followed by applying the singular value decomposition formula for each of the matrices in the equation $\Phi_{\alpha}\Phi^{\dag}_{\alpha}\Phi_{\alpha}=0$. The fact that $\Phi_{\alpha}=0$ means, see (\ref{anzats}) and the normalization just after it, that the pure boson being a special $f=0$ case of the deformed boson with the DSF (\ref{solution1}) is unsuitable for the realization of the two-fermion composite quasiboson.

\section{Quasibosons with $q$-deformed constituent fermions}\label{sec7}

Now let us go over to the $q$-generalization of the model considered above.  Namely, we adopt nontrivial $q$-deformation for the constituents, the other assumptions being left as above. So, we start from the set of $q$-fermions, see~\cite{Viswanthan}, independent in fermionic sense:
\begin{eqnarray}
a_{\mu}a^{\dag}_{\mu'}+q^{\delta_{\mu\mu'}}a^{\dag}_{\mu'}a_{\mu}=\delta_{\mu\mu'},\qquad
b_{\nu}b^{\dag}_{\nu'}+q^{\delta_{\nu\nu'}}b^{\dag}_{\nu'}b_{\nu}=\delta_{\nu\nu'},\label{q-commut}\\
a_{\mu}a_{\mu'}+a_{\mu'}a_{\mu}=0,\ \ \mu\neq\mu',\quad
b_{\nu}b_{\nu'}+b_{\nu'}b_{\nu}=0,\ \ \nu\neq\nu'.\label{q-nez}
\end{eqnarray}
The commutation relations~(\ref{q-commut}) within one mode i.e.  for $\mu=\mu'$ and $\nu=\nu'$ completely determine the set of admissible values of the parameter $q$ and the (absence or presence, and the order of) nilpotency of the operators $a^{\dag}_{\mu}$ and $b^{\dag}_{\nu}$ depending on $q$. More precisely this is reflected in the following statement.
\begin{lemma}\label{lemma_1}
For the positivity of the norm of $q$-fermion states it is necessary to put $q\in \mathds{R}$ and $q\le 1$. If $q=1$ then $a^\dag_{\mu}$ and $b^\dag_{\nu}$ are nilpotent of second order; otherwise, if $q<1$ the operators $a^\dag_{\mu}$ and $b^\dag_{\nu}$ are not nilpotent of any order:
\begin{eqnarray}
q=1 \quad\Rightarrow\quad (a^\dag_{\mu})^2 = 0, \quad (b^\dag_{\nu})^2 = 0;\cr
q<1 \quad\Rightarrow\quad (a^\dag_{\mu})^k\neq 0,\quad (b^\dag_{\nu})^k\neq 0, \quad k\ge2.\label{q4}
\end{eqnarray}
\end{lemma}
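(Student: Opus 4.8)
The plan is to analyze the single-mode $q$-fermion relation $a_\mu a^\dag_\mu + q\, a^\dag_\mu a_\mu = 1$ (taking $\mu = \mu'$, so $q^{\delta_{\mu\mu'}} = q$) by building the Fock-type tower of states $(a^\dag_\mu)^k |O\rangle$ and demanding that their norms be nonnegative. First I would fix a single mode, drop the index, and compute the norm $\|(a^\dag)^k|O\rangle\|^2$ as a function of $q$ and $k$. The natural tool is to introduce the number-operator eigenvalue structure: from the single-mode relation one expects $a^\dag a$ to act on $(a^\dag)^k|O\rangle$ with an eigenvalue $[k]$ depending on $q$, analogous to a $q$-bracket. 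I would derive the recursion for these eigenvalues directly from $a a^\dag = 1 - q\, a^\dag a$, pushing one annihilation operator through the string of $k$ creation operators.

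The key computation is the recursion for $c_k \equiv \|(a^\dag)^k|O\rangle\|^2$. Applying the defining relation and using $a|O\rangle = 0$, I expect to obtain $c_{k+1} = \lambda_k\, c_k$ where $\lambda_k$ is the eigenvalue of $a^\dag a$ on the $k$-th state, and a parallel recursion $\lambda_{k+1} = 1 - q\lambda_k$ for that eigenvalue (with $\lambda_0 = 0$). Solving the latter linear recursion gives a closed form: for $q \ne 1$ it yields a geometric-type expression $\lambda_k = \frac{1-(-q)^k}{1+q}$, while at $q = 1$ it alternates as $\lambda_k = 0, 1, 0, 1, \dots$. This is exactly the place where the two regimes split, so this step is both the technical heart and the source of the dichotomy in the statement.

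From the closed form the conclusions follow by inspection. At $q=1$ one has $\lambda_1 = 1$ but $\lambda_2 = 0$, so $\|(a^\dag)^2|O\rangle\|^2 = 0$, forcing $(a^\dag)^2 = 0$ (a vanishing norm of a vector in a space with positive-definite inner product means the vector itself vanishes), which gives the nilpotency line of~(\ref{q4}). For $q \ne 1$, I would show each $\lambda_k > 0$ strictly, so that no $c_k$ vanishes and $(a^\dag)^k \ne 0$ for all $k$; the positivity requirement $\lambda_k \ge 0$ for every $k$ is what pins down the admissible range of $q$. Checking $\frac{1-(-q)^k}{1+q} \ge 0$ for all $k \ge 1$ forces $q$ real (otherwise $(-q)^k$ has an imaginary part and $\lambda_k$ cannot stay real and nonnegative) and then $q \le 1$ (for $q > 1$ the alternating term $(-q)^k$ grows in magnitude and drives $\lambda_k$ negative for even $k$). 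The same analysis applies verbatim to the $b_\nu$ sector.

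The main obstacle I anticipate is handling the positivity condition cleanly across all $k$ simultaneously rather than for a fixed small $k$: one must argue that requiring $\lambda_k \ge 0$ for \emph{every} $k$ (not just $k=2$) is what excludes complex $q$ and $q>1$, and conversely that $0 < q \le 1$ suffices. I would dispose of the complex case first by noting that $\lambda_2 = 1 - q$ and $\lambda_3 = 1 - q + q^2$ (via the recursion) must both be real and nonnegative, which already forces $q \in \mathds{R}$; then a short sign analysis of the geometric formula over even versus odd $k$ settles the bound $q \le 1$ and the strict positivity for $q < 1$.
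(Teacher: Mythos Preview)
Your approach is essentially the same as the paper's: both compute $\|(a^\dag_\mu)^k|O\rangle\|^2 = \prod_{j=1}^{k}[j]_{-q}$ with $[j]_{-q}=\frac{1-(-q)^j}{1+q}$ (your $\lambda_j$), and then read off the constraints on $q$ and the (non)nilpotency from the sign and vanishing of these factors. The paper simply writes down this product using the number operator, while you derive it via the recursion $\lambda_{k+1}=1-q\lambda_k$; apart from a harmless index shift (one actually gets $c_{k+1}=\lambda_{k+1}c_k$), the content is identical.
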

\begin{proof}
The Lemma follows from the expression for the norm of the vector $x\!=\!(a^{\dag}_{\mu})^m |0\rangle$:
\begin{eqnarray*}
\fl ||x||^2=\langle0| a_{\mu}^k (a_{\mu}^{\dag})^k|0\rangle = \langle0|
 a_{\mu}^{k-1}[n^a_\mu+1]_{-q}(a_{\mu}^{\dag})^{k-1}|0\rangle =
  \langle0| a_{\mu}^{k-1}(a_{\mu}^{\dag})^{k-1}[n^a_\mu+k]_{-q}|0\rangle =\\
= [k]_{-q} \langle0| a_{\mu}^{k-1}(a_{\mu}^{\dag})^{k-1}|0\rangle  =
\ldots = [k]_{-q}[k-1]_{-q}\cdot...\cdot[1]_{-q},
\end{eqnarray*}
where the notation $[n]_{-q}\equiv \bigl((-q)^n-1\bigr)/\bigl((-q)-1\bigr)$ is nothing but  the deformation structure function for the $q$-fermions; $n^a_\mu$ is the number operator for $q$-fermions of $a$ type.  The same considerations apply to the operators $b^\dag_{\nu}$. This ends the proof.
\end{proof}
The $q=1$ case (i.e., usual fermions with well-known nilpotency of their creation/anni\-hilation operators) was completely analyzed in the preceding section (and also in~\cite{GKM-1}). Here we restrict ourselves to the case of $q<1$. Hence (\ref{q4}) holds for any~$k$.

The composite quasibosons' creation and annihilation operators are defined as
\begin{equation*}
A^{\dag}_{\alpha}=\sum\limits_{\mu\nu}\Phi^{\mu\nu}_{\alpha}a^{\dag}_{\mu}b^{\dag}_{\nu},
\quad
A_{\alpha}=\sum\limits_{\mu\nu}\overline{\Phi}^{\mu\nu}_{\alpha}b_{\nu}a_{\mu},
\end{equation*}
that is, like in (\ref{anzats}).  The requirements of the self-consistency of the realization (by deformed bosons) remain intact, see (\ref{system1_0}) and~(\ref{system1}):
\begin{eqnarray}
\label{treb1} A^{\dag}_{\alpha}A_{\alpha}\cong \phi(N_{\alpha}),\quad A_{\alpha}A^{\dag}_{\alpha}\cong \phi(N_{\alpha}+1),\\
\label{treb2} [A^{\dag}_{\alpha},A^{\dag}_{\beta}]\cong 0\ \Leftrightarrow\ [A_{\alpha},A_{\beta}]\cong 0,\quad [A_{\alpha},A^{\dag}_{\beta}]\cong0,\ \alpha\neq\beta,\\
\label{treb3} [N_{\alpha},A^{\dag}_{\alpha}]\cong A^{\dag}_{\alpha},\quad [N_{\alpha},A_{\alpha}]\cong -A_{\alpha}.
\end{eqnarray}
In this case the requirement of independence $[A^{\dag}_{\alpha},A^{\dag}_{\beta}]\cong 0$, as one can easily check, leads to the following condition on matrices $\Phi_{\alpha}$:
\begin{equation}\label{nez1}
\Phi^{\mu\nu}_{\alpha}\Phi^{\mu\nu'}_{\beta} = \Phi^{\mu\nu'}_{\alpha}\Phi^{\mu\nu}_{\beta},\quad \Phi^{\mu\nu}_{\alpha}\Phi^{\mu'\nu}_{\beta} = \Phi^{\mu'\nu}_{\alpha}\Phi^{\mu\nu}_{\beta}.
\end{equation}
The second relation in (\ref{treb1}) implies that there should be
\begin{equation}
A_{\alpha}(A^{\dag}_{\alpha})^n |O\rangle =
\phi(N_{\alpha}+1)(A^{\dag}_{\alpha})^{n-1}|O\rangle,\qquad \
n=1,2,3...\,.\label{system3}
\end{equation}
Using (\ref{treb3}) we obtain:
\[
\phi(N_{\alpha}+1)(A^{\dag}_{\alpha})^{n-1}|O\rangle = (A^{\dag}_{\alpha})^{n-1}\phi(N_{\alpha}+n)|O\rangle.
\]
As a result, we arrive at
\begin{equation}
A_{\alpha}(A^{\dag}_{\alpha})^n |O\rangle = \phi(n) (A^{\dag}_{\alpha})^{n-1}|O\rangle,\quad n=1,2,3...\,
.\label{system3-2}
\end{equation}
It can be checked by induction that
\begin{eqnarray*}
A_{\alpha}(A^{\dag}_{\alpha})^n =
(-1)^{\left[\frac{n-1}{2}\right]}\overline {\Phi_{\alpha}^{\mu\nu}}
\prod_{j=1}^n \Phi_{\alpha}^{\mu_1\nu_1}\cdot\\
\cdot\biggl[\sum\limits_{i=1}^n (-1)^{i-1}
\delta_{\mu\mu_i}q^{\sum\limits_{s=1}^{i-1}\delta_{\mu\mu_s}}
\prod_{\scriptstyle r=1\atop\scriptstyle r\neq i}^n a^{\dag}_{\mu_r}
+ (-1)^n q^{\sum\limits_{s=1}^n \delta_{\mu\mu_s}} \prod_{r=1}^n
a^{\dag}_{\mu_r}\cdot a_{\mu}\biggr]\cdot\\
\cdot\biggl[\sum\limits_{k=1}^n (-1)^{k-1}
\delta_{\nu\nu_k}q^{\sum\limits_{s=1}^{k-1} \delta_{\nu\nu_s}}
\prod_{\scriptstyle r=1\atop\scriptstyle r\neq k}^n b^{\dag}_{\nu_r}
+ (-1)^n q^{\sum\limits_{s=1}^n \delta_{\nu\nu_s}} \prod_{r=1}^n
b^{\dag}_{\nu_r}\cdot b_{\nu}\biggr].
\end{eqnarray*}
Then, using equation (\ref{system3-2}) we arrive at
\begin{eqnarray}
\phi(n)\prod_{l=1}^{n-1} \Phi_{\alpha}^{\mu_l\nu_l}
a^{\dag}_{\mu_l}b^{\dag}_{\nu_l}|O\rangle =
(-1)^{\left[\frac{n-1}{2}\right]}\overline{\Phi_{\alpha}^{\mu\nu}}
\prod_{l=1}^n \Phi_{\alpha}^{\mu_l\nu_l}\cdot\nonumber\\
\cdot\biggl[\sum\limits_{i=1}^n (-1)^{i-1}
\delta_{\mu\mu_i}q^{\sum\limits_{s=1}^ {i-1} \delta_{\mu\mu_s}}
\prod_{\scriptstyle r=1\atop\scriptstyle r\neq i}^n a^{\dag}_{\mu_r}
+ (-1)^n q^{\sum\limits_{s=1}^n \delta_{\mu\mu_s}} \prod_{r=1}^n a^{\dag}_{\mu_r}\cdot a_{\mu}\biggr]\cdot\nonumber\\
\cdot\biggl[\sum\limits_{k=1}^n (-1)^{k-1}
\delta_{\nu\nu_k}q^{\sum\limits_{s=1}^{k-1} \delta_{\nu\nu_s}}
\prod_{\scriptstyle r=1\atop\scriptstyle r\neq k}^n b^{\dag}_{\nu_r}
+ (-1)^n q^{\sum\limits_{s=1}^n \delta_{\nu\nu_s}} \prod_{r=1}^n
b^{\dag}_{\nu_r}\cdot b_{\nu}\biggr]|O\rangle.\label{vspm}
\end{eqnarray}
Note that if (\ref{vspm}) holds on the vacuum, the following equality holds on any state:
\begin{eqnarray}
(-1)^{\left[\frac{n-1}{2}\right]}\overline {\Phi_{\alpha}^{\mu\nu}}
\prod_{l=1}^n \Phi_{\alpha}^{\mu_l\nu_l}\cdot
\biggl[\sum\limits_{i=1}^n (-1)^{i-1}
\delta_{\mu\mu_i}q^{\sum\limits_{s=1}^{i-1} \delta_{\mu\mu_s}}
\prod_{\scriptstyle r=1\atop\scriptstyle r\neq i}^n a^{\dag}_{\mu_r}
\biggr]\cdot\nonumber\\
\cdot\biggl[\sum\limits_{k=1}^n (-1)^{k-1}
\delta_{\nu\nu_k}q^{\sum\limits_{s=1}^{k-1} \delta_{\nu\nu_s}}
\prod_{\scriptstyle r=1\atop\scriptstyle r\neq k}^n
b^{\dag}_{\nu_r}\biggr]=\phi(n)\prod_{l=1}^{n-1}
\Phi_{\alpha}^{\mu_l\nu_l}a^{\dag}_{\mu_l}b^{\dag}_{\nu_l}\,.\label{vspm2}
\end{eqnarray}
As a recursive step, let us consider the following relation valid for $n+1$:
\begin{eqnarray*}
\fl A_{\alpha}(A^{\dag}_{\alpha})^{n+1} =
(-1)^{\left[\frac{n}{2}\right]}
\overline{\Phi_{\alpha}^{\mu\nu}}\prod_{l=1}^n
\Phi_{\alpha}^{\mu_l\nu_l} \cdot\biggl[\sum\limits_{i=1}^{n+1}
(-1)^{i-1} \delta_{\mu\mu_i}q^{\sum\limits_{s=1}^{i-1}
\delta_{\mu\mu_s}} \prod_{\scriptstyle r=1\atop\scriptstyle r\neq
i}^{n+1} a^{\dag}_{\mu_r}\biggr]\cdot\\
\cdot\biggl[\sum\limits_{k=1}^{n+1} (-1\!)^{k-1}
\delta_{\nu\nu_k}q^{\sum\limits_{s\!=\!1}^{k\!-\!1} \!\delta_{\nu\nu_s}}
\!\!\prod_{\scriptstyle r=1\atop\scriptstyle r\neq k}\!
b^{\dag}_{\nu_r}\biggr] \Phi_{\alpha}^{\mu_{n\!+\!1}\nu_{n\!+\!1}} \!+\!
(-1\!)^{\left[\frac{n-1}{2}\right]} \overline{\Phi_{\alpha}^{\mu\nu}}
\!\prod_{l=1}^n\! \Phi_{\alpha}^{\mu_l\nu_l}\!\cdot \\
\biggl[\biggl(\sum\limits_{i=1}^{n+1} (-1)^{i-1}
\delta_{\mu\mu_i}q^{\sum\limits_{s=1}^{i-1} \delta_{\mu\mu_s}}
\prod_{\scriptstyle r=1\atop\scriptstyle r\neq i}
a^{\dag}_{\mu_r}\biggr)\cdot (-1)^n q^{\sum\limits_{s=1}^{n+1}
\delta_{\nu\nu_s}} \prod_{r=1}^n b^{\dag}_{\nu_r}\cdot b_{\nu} + \\
+ (-1)^n q^{\sum\limits_{s=1}^{n+1} \delta_{\mu\mu_s}} \prod_{r=1}^n
a^{\dag}_{\mu_r}\cdot a_{\mu} \biggl(\sum\limits_{k=1}^{n+1}
(-1)^{k-1}
\delta_{\nu\nu_k}q^{\sum\limits_{s=1}^{k-1}\delta_{\nu\nu_s}}
\prod_{\scriptstyle r=1\atop\scriptstyle r\neq k}^{n}
b^{\dag}_{\nu_r}\biggr)+\\
+ q^{\sum\limits_{s=1}^{n+1} \delta_{\mu\mu_s}+\delta_{\nu\nu_s}}
\prod_{r=1}^n a^{\dag}_{\mu_r} \cdot a_{\mu} \prod_{r=1}^n
b^{\dag}_{\nu_r}\cdot b_{\nu}\biggr]
\Phi_{\alpha}^{\mu_{n+1}\nu_{n+1}}a^{\dag}_{\mu_{n+1}}b^{\dag}_{\nu_{n+1}}
\mathop{=}\limits^{(\ref{vspm2})}\\
\mathop{=}\limits^{(\ref{vspm2})}
\phi(n)\!\prod_{l=1}^n\! \Phi_{\alpha}^{\mu_l\nu_l}
a^{\dag}_{\mu_l}b^{\dag}_{\nu_l} \!+\!
(-1\!)^{\left[\frac{n}{2}\right]}\overline{\Phi_{\alpha}^{\mu\nu}}
\!\prod_{l=1}^{n+1}\! \Phi_{\alpha}^{\mu_l\nu_l}\cdot\\
\Biggl[(-1\!)^n\!\biggl(\!\sum\limits_{i=1}^n (-1\!)^{i\!-\!1} \delta_{\mu\mu_i}
q^{\sum\limits_{s\!=\!1}^{i\!-\!1} \!\delta_{\mu\mu_s}} \!\!\prod_{\scriptstyle
r=1\atop\scriptstyle r\neq i}\! a^{\dag}_{\mu_r}\!\biggr)
\Bigl(\!\delta_{\nu\nu_{n\!+\!1}}q^{\sum\limits_{s\!=\!1}^n\!
\delta_{\nu\nu_s}} \!\prod_{r=1}^n\! b^{\dag}_{\nu_r} \!\!-\! q^{\sum\limits_{s\!=\!1}^{n\!+\!1} \!\delta_{\nu\nu_s}}
\!\prod_{r=1}^{n+1}\! b^{\dag}_{\nu_r}\!\!\cdot\! b_{\nu}\!\Bigr) \!+\\
+ (-1\!)^n\!\Bigl(\!\delta_{\mu\mu_{n\!+\!1}}q^{\sum\limits_{s\!=\!1}^n \!\delta_{\mu\mu_s}}
\!\!\!\prod_{r=1}^n a^{\dag}_{\mu_r}
\!\!-\!  q^{\sum\limits_{s\!=\!1}^{n\!+\!1} \!\delta_{\mu\mu_s}}
\!\!\!\prod_{r=1}^{n+1}\! a^{\dag}_{\mu_r}\!\!\cdot\! a_{\mu}\!\Bigr)
\biggl(\!\sum\limits_{k=1}^n (-1\!)^{k\!-\!1}
\delta_{\nu\nu_k}q^{\sum\limits_{s\!=\!1}^{k\!-\!1} \!\delta_{\nu\nu_s}}
\!\!\!\prod_{\scriptstyle r=1\atop\scriptstyle r\neq k}^{n+1}\! b^{\dag}_{\nu_r}\!\biggr)\!+\\
+ \Bigl(\!\delta_{\mu\mu_{n\!+\!1}}q^{\sum\limits_{s\!=\!1}^n\!
\delta_{\mu\mu_s}} \!\!\!\prod_{r=1}^n\! a^{\dag}_{\mu_r} \!\!-\!
q^{\sum\limits_{s\!=\!1}^{n\!+\!1} \!\delta_{\mu\mu_s}} \!\!\!\prod_{r=1}^{n+1}\!
a^{\dag}_{\mu_r}\!\!\cdot\! a_{\mu}\!\Bigr)
\Bigl(\!\delta_{\nu\nu_{n\!+\!1}}q^{\sum\limits_{s\!=\!1}^n
\!\delta_{\nu\nu_s}} \!\!\!\prod_{r=1}^n b^{\dag}_{\nu_r}\!\! -\!
q^{\sum\limits_{s\!=\!1}^{n\!+\!1} \!\delta_{\nu\nu_s}} \!\!\!\prod_{r=1}^{n+1}\!
b^{\dag}_{\nu_r}\!\!\cdot\! b_{\nu}\!\Bigr)\Biggr]
\end{eqnarray*}
where at the last stage we have used (\ref{vspm2}). Substituting the last expression for $A_{\alpha}(A^{\dag}_{\alpha})^{n+1}$ into (\ref{system3-2}) rewritten for $n\!\rightarrow\! n\!+\!1$ we deduce the following relation that involves the linear combination:
\vspace{-1.5mm}
\begin{equation}
\sum_{\mu_1...\mu_n,\nu_1...\nu_n} B^{\mu_1...\mu_n,\nu_1...\nu_n}(\Phi_{\alpha},q) \cdot
e_{\mu_1...\mu_n,\nu_1...\nu_n} = 0,\label{rav3}
\end{equation}
\vspace{-0.5mm}
where the coefficients are
\vspace{-0.5mm}
\begin{eqnarray*}
\fl B^{\mu_1...\mu_n,\nu_1...\nu_n}(\Phi_{\alpha},q) = -\sum\limits_{i=1}^{n} q^{\sum\limits_{s=1}^{i-1}
(\delta_{\mu\mu_s}+\delta_{\nu\nu_s})}
\Phi_{\alpha}^{\mu_n\nu}\overline{\Phi_{\alpha}^{\mu\nu}}\Phi_{\alpha}^{\mu\nu_n} \prod_{l=1}^{n-1}
\Phi_{\alpha}^{\mu_l\nu_l}\cdot\\
\Bigl((-1)^{\sum\limits_{r=i}^{n-1} \delta_{\nu_r\nu_{r+1}}}
q^{\sum\limits_{s=i}^n \delta_{\nu\nu_s}}
+ (-1)^{\sum\limits_{r=i}^{n-1} \delta_{\mu_r\mu_{r+1}}}
q^{\sum\limits_{s=i}^n \delta_{\mu\mu_s}}\Bigr) +\\
+q^{\sum\limits_{s=1}^n
(\delta_{\mu\mu_s}+\delta_{\nu\nu_s})}\overline{\Phi_{\alpha}^{\mu\nu}}
\Phi_{\alpha}^{\mu\nu} \prod_{l=1}^n \Phi_{\alpha}^{\mu_l\nu_l} -
\left[\phi(n+1)- \phi(n)\right]\prod_{l=1}^n
\Phi_{\alpha}^{\mu_l\nu_l}
\end{eqnarray*}
and the basis elements are
\begin{equation*}
e_{\mu_1...\mu_n,\nu_1...\nu_n} = a^{\dag}_{\mu_1}b^{\dag}_{\nu_1}... a^{\dag}_{\mu_n}b^{\dag}_{\nu_n}|O\rangle.
\end{equation*}
These basis elements are independent for differing sets of indices $\mu_1...\mu_n$ and $\nu_1...\nu_n$ regardless of any permutations within each set. So let us extract in (\ref{rav3}) the terms with $\mu_1=\ldots=\mu_n$ and $\nu_1=\ldots=\nu_n$; using their linear independence from the others, we infer $B^{\mu_1...\mu_1,\nu_1...\nu_1}(\Phi_{\alpha},q)=0$, that can be rewritten in the following form:
\begin{eqnarray*}
\fl \sum\limits_{i=1}^{n} (\!-1\!)^{n\!+i\!-\!1}\!\Bigl(\!2\!+\!(\delta_{\mu\mu_1}\!\!+\!
\delta_{\nu\nu_1})(q^n\!\!-\!q^{i\!-\!1}\!\!-\!2)\!+\!2\delta_{\mu\mu_1}\!\delta_{\nu\nu_1}(q^{n}\!\!-\!1)(q^{i\!-\!1}\!\!-\!1)\!\Bigr)\!
\Phi_{\alpha}^{\mu_1\nu}\overline{\Phi_{\alpha}^{\mu\nu}}\!
\Phi_{\alpha}^{\mu\nu_1}\!(\!\Phi_{\alpha}^{\mu_1\nu_1}\!)^{n\!-\!1} \!\!+\! \\
+\Bigl(1+(\delta_{\mu\mu_1}+\delta_{\nu\nu_1})(q^n-1)+
\delta_{\mu\mu_1}\delta_{\nu\nu_1}(q^n-1)^2\Bigr)
\overline{\Phi_{\alpha}^{\mu\nu}}\Phi_{\alpha}^{\mu\nu}(\Phi_{\alpha}^{\mu_1\nu_1})^{n}=\\
\qquad\qquad=[\phi(n+1)-\phi(n)](\Phi_{\alpha}^{\mu_1\nu_1})^{n} .
\end{eqnarray*}
Performing the summation over $i$, $\mu$, $\nu$ on the left-hand side, we find
\begin{eqnarray}
\fl ((-1)^n\!-\!1)\left(\Phi_{\alpha}\Phi_{\alpha}^{\dag}\Phi_{\alpha}\right)^{\mu_1\nu_1}\!\!(\Phi_{\alpha}^{\mu_1\nu_1})^{n-1}\!
+\Bigl(\frac12 (-q)^n + \frac{q\!-\!1}{2(q\!+\!1)}q^n - \frac{q}{q\!+\!1}(-1)^n\Bigr)\!\bigl[(\Phi_{\alpha}^{\dag}\Phi_{\alpha})^{\nu_1\nu_1} \!\!+\nonumber\\
+ (\Phi_{\alpha}\Phi_{\alpha}^{\dag})^{\mu_1\mu_1}\bigr](\Phi_{\alpha}^{\mu_1\nu_1})^{n}
+\frac{q-1}{q+1}(q^n-1)\left(q^n-(-1)^n\right)|\Phi_{\alpha}^{\mu_1\nu_1}|^2(\Phi_{\alpha}^{\mu_1\nu_1})^{n}=\nonumber\\
\qquad\qquad=[\phi(n+1)-\phi(n)-1](\Phi_{\alpha}^{\mu_1\nu_1})^{n}.
\label{usl5}
\end{eqnarray}
For all the indices $(\mu_1,\nu_1)$ for  which $\Phi_{\alpha}^{\mu_1\nu_1}\neq 0$, the last equation can be divided by $(\Phi_{\alpha}^{\mu_1\nu_1})^{n}$. Summing (\ref{usl5}) over $n$ from $n=1$ to $n=s$ and then replacing in the resulting equality $s\rightarrow n-1$ we obtain:
\begin{eqnarray*}
\fl \Bigl(\frac{1-(-1)^{n}}{2}-n\Bigr)
\frac{(\Phi_{\alpha}\Phi^{\dag}_{\alpha}
\Phi_{\alpha})^{\mu_1\nu_1}}{\Phi_{\alpha}^{\mu_1\nu_1}}+
\Bigl([n]_{-q}-\frac{1-(-1)^n}{2}\Bigr)^2 \cdot|\Phi_{\alpha}^{\mu_1\nu_1}|^2+\\
+\frac{1-(-1)^n}{2}\bigl([n]_{-q}-1\bigr)
\left[(\Phi_{\alpha}^{\dag}\Phi_{\alpha})^{\nu_1\nu_1} +
(\Phi_{\alpha}\Phi_{\alpha}^{\dag})^{\mu_1\mu_1}\right] =
\phi(n)-n,\ n\geq 2.
\end{eqnarray*}
Note that the functions $\left(\frac{1-(-1)^{n}}{2}-n\right)$, $\left([n]_{-q}-\frac{1-(-1)^n}{2}\right)^2$ and  $\frac{1-(-1)^n}{2}\bigl([n]_{-q}-1\bigr)$ as functions of $n$ are independent for the admissible values of $q$. Hence
$(\Phi_{\alpha}\Phi^{\dag}_{\alpha}\Phi_{\alpha})^{\mu_1\nu_1}/\Phi_{\alpha}^{\mu_1\nu_1}$,
$|\Phi_{\alpha}^{\mu_1\nu_1}|^2$ and
$[(\Phi_{\alpha}^{\dag}\Phi_{\alpha})^{\nu_1\nu_1} +
(\Phi_{\alpha}\Phi_{\alpha}^{\dag})^{\mu_1\mu_1}]$ do not depend on
$(\mu_1,\nu_1)$ if $\Phi_{\alpha}^{\mu_1\nu_1}\neq 0$:
\begin{eqnarray*}
{(\Phi_{\alpha}\Phi^{\dag}_{\alpha}\Phi_{\alpha})^{\mu_1\nu_1}}/{\Phi_{\alpha}^{\mu_1\nu_1}}=p_1,\\
|\Phi_{\alpha}^{\mu_1\nu_1}|^2=p_2,\\
(\Phi_{\alpha}^{\dag}\Phi_{\alpha})^{\nu_1\nu_1} +
(\Phi_{\alpha}\Phi_{\alpha}^{\dag})^{\mu_1\mu_1}=p_3,
\end{eqnarray*}
where $p_1$, $p_2$ and $p_3$ are some numerical parameters. Thus, we obtain
\begin{eqnarray}
\fl \phi(n)=n-\Bigl(n\!-\!\frac{1\!-\!(-1)^{n}}{2}\Bigr)p_1+\Bigl([n]_{-q}\!-\!\frac{1\!-\!(-1)^n}{2}\Bigr)^2p_2 + \frac{1\!-\!(-1)^n}{2}\bigl([n]_{-q}\!-\!1\bigr)p_3.\label{phi2}
\end{eqnarray}
\par
Let us now consider the terms in equation (\ref{rav3}) with $n$ equated indices $\mu_1=\ldots=\mu_n$ and with $n-1$ equated indices in the set ($\nu_1,\ldots,\nu_n$), the remaining one being different. Denote the $n-1$ equal indices by $\nu_1$, and the differing one (suppose it occupies the $k$th position) by $\nu_2$. Due to the linear independence of the mentioned terms from the others we obtain the equation
\begin{equation}
\fl \sum_{k=1}^n B^{\mu_1...\mu_1,\nu_1..\nu_k..\nu_1} e_{\mu_1...\mu_1,\nu_1..\nu_k..\nu_1}|_{\nu_k\rightarrow \nu_2} = 0\ \ {\rm i.e.}\ \sum_{k=1}^n (-1)^k B^{\mu_1...\mu_1,\nu_1..\nu_k..\nu_1}|_{\nu_k\rightarrow \nu_2}=0.\label{eq1}
\end{equation}
Introducing auxiliary notations
\begin{equation*} \left\{
\eqalign{
X=\Phi_{\alpha}^{\mu_1\nu_1}\Phi_{\alpha}^{\mu_1\nu_2},\\
Y=\Phi_{\alpha}^{\mu_1\nu_1}\Phi_{\alpha}^{\mu_1\nu_2}
(\Phi_{\alpha}\Phi_{\alpha}^{\dag})^{\mu_1\mu_1},\\
Z=(\Phi_{\alpha}^{\mu_1\nu_1})^2(\Phi_{\alpha}^{\dag}\Phi_{\alpha})^{\nu_1\nu_2},
}
\right.
\end{equation*}
after performing all the summations in (\ref{eq1}) we obtain:
\begin{eqnarray*}
\eqalign{
[Xp_2]
(-1)^nq^{2n}+\\
[(-q^3\!+\!2q^2\!-\!3q\!+\!4)p_2X]
q^{2n}+\\
[((q^2\!-\!2\!-\!q)p_2+2p_3)X+(-\!2\!-\!q)Y]
nq^n+\\
[((-\!3q^3\!+\!17q^2\!+\!q^4\!-\!26\!-\!5q)p_2\!+\!(-\!4q\!+\!2q^2\!+\!2)p_3)X\!+\!(6\!+\!5q\!-\!q^3\!-\!2q^2)Y\!+\\
\qquad\qquad\qquad+(4q\!-\!10q^2\!+\!6)Z]q^n+\\
[((-\!q^3\!+\!q\!+\!2q^2\!-\!2)p_2+(2\!-\!2q)p_3)X+(q^2\!+\!3q\!-\!2)Y+(-\!2\!-\!2q)Z]
(-q)^n+\\
[((q^2\!+\!3\!-\!4q)p_2+(-\!4q^2\!+\!2q\!+\!2)p_3)X+(4q^2\!-\!q\!-\!5)Y+(3q^2\!+\!1)Z]
(-1)^n+\\
[(2p_1+(-\!3q\!+\!5)p_2-2p_3)X+Y+(3q\!-\!3)Z]
n+\\
[((8\!-\!8q^2)p_1+(23\!-\!3q\!-\!19q^2\!+\!7q^3)p_2+(8q\!-\!4q^3\!+\!2q^2\!-\!6)p_3)X+\\
\qquad\qquad\qquad+(4q^3\!-\!5\!-\!12q\!+\!5q^2)Y+(-\!3q^3\!-\!11\!+\!3q\!+\!11q^2)Z]=0.
}
\end{eqnarray*}
Extracting the coefficients of this system  at the linearly independent functions $(-1)^nq^{2n}$, $q^{2n}$, $nq^n$, $q^n$, $(-q)^n, (-1)^n, n, 1$ (considered as the elements of the vector space of functions of $n$), we arrive at  the following system:
\[
\left\{
\eqalign{
Xp_2=0,\cr
[-q^3+2q^2-3q+4]p_2X=0,\cr
[(q^2\!-\!2\!-\!q)p_2+2p_3]X+[-2-q]Y=0,\cr
[(-\!3q^3\!+\!17q^2\!+\!q^4\!-\!26\!-\!5q)p_2\!+\!(-\!4q\!+\!2q^2\!+\!2)p_3]X\!+\![6\!+\!5q\!-\!q^3\!-\!2q^2]Y\!+\cr
\qquad\qquad\qquad+[4q\!-\!10q^2\!+\!6]Z=0,\cr
[(-\!q^3\!+\!q\!+\!2q^2\!-\!2)p_2+(2\!-\!2q)p_3]X+[q^2\!+\!3q\!-\!2]Y+[-\!2\!-\!2q]Z=0,\cr
[(q^2\!+\!3\!-\!4q)p_2+(-\!4q^2\!+\!2q\!+\!2)p_3]X+[4q^2\!-\!q\!-\!5]Y+[3q^2\!+\!1]Z=0,\cr
[2p_1+(-\!3q\!+\!5)p_2\!-\!2p_3]X+Y+[3q\!-\!3]Z=0,\cr
[(8\!-\!8q^2)p_1\!+\!(23\!-\!3q\!-\!19q^2\!+\!7q^3)p_2\!+\!(8q\!-\!4q^3\!+\!2q^2\!-\!6)p_3]X+\cr
\,\qquad\qquad\qquad+[4q^3\!-\!5\!-\!12q\!+\!5q^2]Y+[-\!3q^3\!-\!11\!+\!3q\!+\!11q^2]Z=0.
}
\right.
\]
The solution of this system is ($q\neq 1$):
\begin{equation*}
\left\{
\eqalign{
X=\Phi_{\alpha}^{\mu_1\nu_1}\Phi_{\alpha}^{\mu_1\nu_2}=0,\cr
Y=\Phi_{\alpha}^{\mu_1\nu_1}\Phi_{\alpha}^{\mu_1\nu_2}(\Phi_{\alpha}\Phi_{\alpha}^{\dag})^{\mu_1\mu_1}=0,\cr
Z=(\Phi_{\alpha}^{\mu_1\nu_1})^2(\Phi_{\alpha}^{\dag}\Phi_{\alpha})^{\nu_1\nu_2}=0.
}
\right.
\end{equation*}
This set of conditions is equivalent to
\begin{equation}\label{aa1}
\Phi_{\alpha}^{\mu_1\nu_1}\Phi_{\alpha}^{\mu_1\nu_2}=0,
\end{equation}
which means that the matrix $\Phi_{\alpha}$ cannot contain two nonzero elements in any one row.

In a similar way we can derive the condition
\begin{equation}\label{aa2}
\Phi_{\alpha}^{\mu_1\nu_1}\Phi_{\alpha}^{\mu_2\nu_1}=0,
\end{equation}
implying that the matrix $\Phi_{\alpha}$ cannot contain two nonzero elements in any one column.

Next, the same analysis as in the previous two paragraphs is performed for those terms in~(\ref{rav3}), for which:
in the set ($\mu_1,\ldots,\mu_n$) there is only one index (denoted by $\mu_2$) different from the other $(n-1)$ equal ones (all denoted by $\mu_1$), and likewise for $\nu$-indices -- in the set ($\nu_1,\ldots,\nu_n$) there is only one index (denoted by $\nu_2$) different from the other, equal ones (all denoted by $\nu_1$). As a result, we derive
\begin{equation}
\Phi_{\alpha}^{\mu_1\nu_1}\Phi_{\alpha}^{\mu_2\nu_2}=0.  \label{aa3}
\end{equation}
That is, the matrix $\Phi_{\alpha}$ cannot have two nonzero elements in differing rows and columns. And, using the previous conditions (\ref{aa1}) and (\ref{aa2}) we obtain that the matrix $\Phi_{\alpha}$ cannot contain two nonzero elements. As a consequence, we obtain the following values for the parameters $p_1,p_2,p_3$:
\[
p_1=p_2=1,\quad p_3=2.
\]
Then the following expression for the deformation structure function results from (\ref{phi2}):
\begin{equation}\label{str_f2}
\phi(n) = \left([n]_{-q}\right)^2.
\end{equation}
The mode-independence conditions contained in (\ref{nez1}) and equalities (\ref{aa1}), (\ref{aa2}) and (\ref{aa3}) enable us to determine the solution for $\Phi_{\alpha}$: the only nonzero elements in matrices $\Phi_{\alpha}$ and $\Phi_{\beta}$ are situated at the intersection of different rows and different columns:
\begin{equation}
\Phi_{\alpha}^{\mu\nu}=\Phi_{\alpha}^{\mu_0(\alpha)\nu_0(\alpha)}\delta_{\mu\mu_0(\alpha)}\delta_{\nu\nu_0(\alpha)},\qquad |\Phi_{\alpha}^{\mu_0(\alpha)\nu_0(\alpha)}|=1.\label{matrPhi2}
\end{equation}

For the illustrative purpose, a more detailed treatment of two particular examples including also the omitted steps of the derivation above, is provided in~\ref{ap3}. The first example concerns the case with only one possible value of $\mu,\nu=1$ for the constituent $q$-fermions modes. The second example concerns the case of two-mode constituents, i.e. of two possible values of~$\mu,\nu=\overline{1,2}$.

It remains to satisfy the commutation relations~(\ref{treb3}) by means of the correct definition of $N_{\alpha}$.  Let $N_{\alpha}$ be defined as $N_{\alpha} \mathop{=}\limits^{def} \chi(A^{\dag}_{\alpha}A_{\alpha}, A_{\alpha}A^{\dag}_{\alpha})$, and the matrices $\Phi_{\alpha}$ are those already found in~(\ref{matrPhi2}). Taking into account the latter we have
\begin{equation*}
A_{\alpha}A^{\dag}_{\alpha}\!\cdot\! (A^{\dag}_{\alpha})^n |O\rangle = [n\!+\!1]^2_{-q} (A^{\dag}_{\alpha})^n |O\rangle,\quad A^{\dag}_{\alpha}A_{\alpha}\!\cdot\! (A^{\dag}_{\alpha})^n |O\rangle = [n]^2_{-q} (A^{\dag}_{\alpha})^n |O\rangle.
\end{equation*}
Then~(\ref{treb3}) is equivalent to
\begin{eqnarray*}
\fl \chi(A^{\dag}_{\alpha}A_{\alpha}, A_{\alpha}A^{\dag}_{\alpha})(A^{\dag}_{\alpha})^{n+1}|O\rangle - A^{\dag}_{\alpha} \chi(A^{\dag}_{\alpha}A_{\alpha}, A_{\alpha}A^{\dag}_{\alpha}) (A^{\dag}_{\alpha})^n|O\rangle = A^{\dag}_{\alpha} (A^{\dag}_{\alpha})^n|O\rangle \Leftrightarrow\\
\fl \Leftrightarrow \chi(A^{\dag}_{\alpha}A_{\alpha}, [n+2]^2_{-q})(A^{\dag}_{\alpha})^{n+1}|O\rangle - A^{\dag}_{\alpha} \chi(A^{\dag}_{\alpha}A_{\alpha}, [n+1]^2_{-q}) (A^{\dag}_{\alpha})^n|O\rangle = (A^{\dag}_{\alpha})^{n+1}|O\rangle \Leftrightarrow\\
\fl \Leftrightarrow\chi([n+1]^2_{-q}, [n+2]^2_{-q})(A^{\dag}_{\alpha})^{n+1}|O\rangle - \chi([n]^2_{-q}, [n+1]^2_{-q}) (A^{\dag}_{\alpha})^{n+1}|O\rangle = (A^{\dag}_{\alpha})^{n+1}|O\rangle \Leftrightarrow\\
\Leftrightarrow \chi([n+1]^2_{-q}, [n+2]^2_{-q}) - \chi([n]^{-q}, [n+1]^2_{-q})  = 1,\ \ n\ge0.
\end{eqnarray*}
Thus the condition $\chi\bigl([n]_{-q}^{\,2},[n+1]_{-q}^{\,2}\bigr)\bigr|_n^{n+1}\equiv \chi([n+1]^2_{-q}, [n+2]^2_{-q}) - \chi([n]^2_{-q}, [n+1]^2_{-q}) =1$, $n=0,1,...$, is necessary and sufficient for~(\ref{treb3}) to hold.
\begin{remark}\label{rem3}
Expression~(\ref{str_f2}) for the structure function is valid only when $q\ne 1$ i.e. when $a_{\mu}^\dag$, $a_{\nu}^\dag$ are not nilpotent of any order. If $q=1$, it is the DSF~(\ref{solution1}) which provides the realization. Thus, the unifying formula for the deformation structure function (of those deformed oscillators that give realization) for quasibosons composed of two $q$-fermions can be written as
\begin{equation}\label{gen_DSF}
\phi(n) = \left\{
\eqalign{
\left([n]_{-q}\right)^2=\Bigl(\frac{1-(-q)^{n}}{1+q}\Bigr)^2,\quad q<1;\cr
\Bigl(1+\frac1m\Bigr)n - \frac1m n^2,\qquad q=1,\quad m\in\mathds{N}.
}
\right.
\end{equation}
The absence of a continuous limit from (\ref{str_f2}) to (\ref{solution1}) when $q\rightarrow 1$ or in other words the discontinuity of (\ref{gen_DSF}) at the $q=1$ point is explained as follows. If $q\ne 1$ then there is an infinite number of basis elements \{$(a_1^\dag)^{k_1}...(a_{d_a}^\dag)^{k_{d_a}}(b_1^\dag)^{l_1}...(b_{d_b}^\dag)^{l_{d_b}} |O\rangle$ $\bigr|$ $k_i,l_j\ge 0$, $\sum_{i=1}^{d_a} k_i = \sum_{j=1}^{d_b} l_j = n$, $n=0,1,2,...$\} of the subspace of composite bosons' states. The latter results in an infinite number of requirements~(\ref{rav3}) thus imposing a considerable restriction on $\Phi_{\alpha}^{\mu\nu}$. On the other hand, if $q=1$, then there is only finite number, equal to $\sum_{k=1}^{\min(d_a,d_b)} C_{d_a}^kC_{d_b}^k = C_{d_a+d_b}^{\max(d_a,d_b)}$ of the basis elements: $|O\rangle$, $a^{\dag}_{\mu}b^{\dag}_{\nu}|O\rangle$,\,\,...\,\,, $a_1^\dag ...a_{\min(d_a,d_b)}^\dag b_1^\dag... b_{\min(d_a,d_b)}^\dag|O\rangle$, that leads to a finite number of requirements~(\ref{rav3}). Moreover, in this case only a few requirements among them are independent, see~(\ref{system2}).
\end{remark}

\section{Conclusions and outlook}

As shown in our preceding paper~\cite{GKM-1} and in Section~\ref{sec2} above, the problem of realization of "fermion+fermion" quasibosons by means of deformed oscillators has nontrivial solutions. In the case of pure fermions as constituents, the structure function $\phi$ of the relevant deformation is found in the form (\ref{solution1}) quadratic in the number operator $N$, with a discrete valued deformation parameter $f=2/m$. This is the only
DSF for which the realization (isomorphism) is possible. In addition, necessary and sufficient conditions on the matrices $\Phi_{\alpha}$ in the construction (\ref{anzats}) of quasibosons, for such representation to be self-consistent, are obtained and expression (\ref{gen_solution}) gives their general solution.

In this paper, the novel generalization was carried out, as presented in Section~\ref{sec7}. This is the case of quasibosons made up of two constituents which are  $q$-deformed fermions~(\ref{q-commut})-(\ref{q-nez}). For this generalization, again, we have derived the relations for the defining matrices $\Phi_{\alpha}$ and solved them. Detailed analysis led us at $q\ne 1$ to the resulting structure function (\ref{str_f2}) of the deformed oscillator which provides the exact realization of the quasibosons made up of two $q$-fermions. The principal distinction of the situation treated herein from the case considered in Section~\ref{sec2} (following~\cite{GKM-1}) is such that, while the pure fermions are nilpotent, the $q$-deformed fermions for $q\ne 1$ are not nilpotent of any order, see (\ref{q4}). Since the second order nilpotency of usual fermions (as the no-deformation limit of $q$-fermions) abruptly appears at $q=1$ according to Lemma~\ref{lemma_1}, there is no direct transition from the DSF~(\ref{str_f2}) to DSF~(\ref{solution1}), as a result of the continuous $q\to 1$ limit. See also Remark~\ref{rem3} including~(\ref{gen_DSF}) on this issue.

The general  strategy of the developed approach is to explore  deformed bosons as tools for the realization of quasibosons, which should give considerable simplification (in the algebraic sense) in subsequent applications, achieved when the algebra representing the initial system of composite particles is treated as the algebra corresponding to some deformed oscillator. The obtained results and the developed approach have a potential application to: various problems in (sub)nuclear physics (with such composite particles as hadrons, nucleon complexes) like the study of pairing in
nuclei~\cite{Sviratcheva}; bipartite entangled composites~\cite{GM_Entang} in the Quantum Information Theory (where the role of quasibosons can be played e.g. by biphotons~\cite{Shih}); Bose-Einstein condensation of composite bosons~\cite{Avan2003} and other thermodynamic questions including
the equation of state for many composite bosons systems. Also, the developed formalism can be applied to modeling physical particles or quasiparticles such as excitons, biphonons and cooperons in the corresponding directions of condensed matter physics. Concerning excitons, there already exists~\cite{Combescot_Exc} the description of interacting excitons using infinite series in their creation operators. Besides, excitons were modeled~\cite{Bagheri_Exc,Liu} by $q$-deformed version of bosons, however, without taking into account their compositeness.

As the next steps we intend to study more complicated situations, say, the case of quasibosons composed of two (deformed) bosons, or from two generally deformed fermions. Also, in our nearest plans there is the analysis of composite (quasi-)fermions. Yet another path of extension is to treat quasi-independent  quasibosons, i.e. those with noncommuting different modes.

\ack
This research was  partially supported
by the Special Program of the Division of Physics and Astronomy of NAS of Ukraine.


\appendix
\renewcommand\thesection{Appendix \Alph{section}}

\section{Proof of Proposition~\ref{prop2}}\label{ap1}

As our treatment below concerns only one mode $\alpha$, we will omit the index $\alpha$. Let us first prove the equality (\ref{e1}). For
$n=0$ this is trivial. Put $n=1$:
\begin{equation*}
[A^{\dag}A,A^{\dag}]=A^{\dag}[A,A^{\dag}]=A^{\dag}(1-\Delta_{\alpha\alpha})\equiv
A^{\dag}\varepsilon=A^{\dag}\left[(A^{\dag}A+\varepsilon)^1-(A^{\dag}A)^1\right].
\end{equation*}
Then we proceed by induction. Assuming that the equality holds for $n$, let us prove that it is valid for $n+1$:
\begin{eqnarray*}
\fl\left[(A^{\dag}A)^{n\!+\!1},A^{\dag}\right] \!=\!
\left[A^{\dag}A(A^{\dag}A)^n,A^{\dag}\right] \!=\!
[A^{\dag}A,A^{\dag}](A^{\dag}A)^n \!+\! A^{\dag}A
[(A^{\dag}A)^n,A^{\dag}] \!=\\
=A^{\dag}\varepsilon (A^{\dag}A)^n + A^{\dag}A
A^{\dag}\left[(A^{\dag}A+\varepsilon)^n-(A^{\dag}A)^n\right]
=A^{\dag}\varepsilon (A^{\dag}A)^n +\\
+A^{\dag}(A^{\dag}A\!+\!\varepsilon)^{n\!+\!1} \!-\! A^{\dag}(A^{\dag}A)^{n\!+\!1} \!-\! A^{\dag}\varepsilon
(A^{\dag}A)^n \!=\!
A^{\dag}\left[(A^{\dag}A\!+\!\varepsilon)^{n\!+\!1}\!-\!(A^{\dag}A)^{n\!+\!1}\right].
\end{eqnarray*}
Consider the second equation. When $n=0$ it is also trivial. For $n=1$ we have
\begin{equation*}
[\varepsilon,A^{\dag}] = -fA^{\dag} = A^{\dag}[(-f+\varepsilon)-\varepsilon].
\end{equation*}
The step of induction is:
\[
\fl\left[\varepsilon^{n+1},A^{\dag}\right] =
\left[\varepsilon\varepsilon^n,A^{\dag}\right] =
-fA^{\dag}\varepsilon^n + \varepsilon
A^{\dag}[(-f+\varepsilon)^n-\varepsilon^n] =
\]
\[ = -fA^{\dag}\varepsilon^n
+ (-fA^{\dag}+A^{\dag}\varepsilon)
[(-f+\varepsilon)^n-\varepsilon^n]
=A^{\dag}[(-f+\varepsilon)^{n+1}-\varepsilon^{n+1}].
\]
Thus, the proposition is proven.

\section{Proof of Proposition~\ref{prop3}}\label{ap2}

When $n=0$ the equality reduces to the definition of $N$.  Let us prove it for $n=1$. Present $\chi$ as the formal series:
\begin{equation*}
\chi(x,y)=\sum\limits_{n,m=1}^{\infty}b_{nm} x^ny^m,\quad [x,y]=0.
\end{equation*}
Then
\begin{eqnarray*}
\fl L_1\!=\![\chi(A^{\dag}A,\varepsilon),A^{\dag}]\!=\! \!\sum\limits_{n,m=1}^{\infty}b_{nm}\!\!
\left[(A^{\dag}A)^n,A^{\dag}\right]\varepsilon^m \!+\! \!\sum\limits_{n,m=1}^{\infty}b_{nm}\!
(A^{\dag}A)^n\!\left[\varepsilon^m,A^{\dag}\right] \!= \\
= \!\!\sum\limits_{n,m=1}^{\infty}b_{nm}\!
\left[(A^{\dag}A\!+\!\varepsilon)^n\!-\!(A^{\dag}A)^n\right]\varepsilon^m \!+\! \!\sum\limits_{n,m=1}^{\infty}b_{nm}\!
(A^{\dag}A)^n A^{\dag}[(-\!f\!+\!\varepsilon)^m\!-\!\varepsilon^m]\! =\\
 = A^{\dag}\left[\chi(A^{\dag}A\!+\!\varepsilon, \varepsilon\!-\!f)-\chi(A^{\dag}A,\varepsilon)\right]
=  A^{\dag}\chi(A^{\dag}A\!+\!\varepsilon,\varepsilon\!-\!f) - A^{\dag}N.
\end{eqnarray*}
Next, proceed by induction. The induction step is:
\begin{eqnarray}
\fl L_{n\!+\!1} \!=\! [L_n,A^{\dag}] \!=\! (A^{\dag})^n[\chi(A^{\dag}A\!+\!n\varepsilon\!-\!\sigma_n f, \varepsilon \!\!-\!\! n f),A^{\dag}] \!-\! \!\sum\limits_{k=0}^{n-1}C_n^k (A^{\dag})^{n\!-\!k}[L_k,A^{\dag}]= \nonumber\\
= (A^{\dag})^n[\chi(A^{\dag}A\!+\!n\varepsilon\!-\!\sigma_n f, \varepsilon \!-\! n f),A^{\dag}] \!-\! \sum\limits_{k=0}^{n-1}C_n^k
(A^{\dag})^{n\!-\!k}L_{k+1}.\label{q2}
\end{eqnarray}
Let us transform the commutator in the last expression,
\begin{eqnarray*}
\fl [\chi(A^{\dag}A\!+\!n\varepsilon\!-\!\sigma_n f, \varepsilon \!-\! n f),A^{\dag}] \!=\! \!\sum\limits_{n,m=1}^{\infty}b_{nm} [(A^{\dag}A\!+\!n\varepsilon\!-\!\sigma_n f)^n, A^{\dag}] (\varepsilon \!-\! n f)^m \!+\\
+ \!\!\sum\limits_{n,m=1}^{\infty}b_{nm} (A^{\dag}A\!+\!n\varepsilon\!-\!\sigma_n f)^n[(\varepsilon \!-\! n f)^m, A^{\dag}]\!  =\\
=A^{\dag}\chi\bigl(A^{\dag}A\!+\!(n\!+\!1)\varepsilon\!-\!\sigma_{n+1}f,\varepsilon \!-\!(n\!+\!1)f\bigr) \!-\! A^{\dag}\chi\bigl(A^{\dag}A\!+\!n\varepsilon\!-\!\sigma_n f,\varepsilon \!-\! nf\bigr),
\end{eqnarray*}
where we have used that $\sigma_{n+1}=\sigma_n+n$. Consequently,
\begin{eqnarray}
\fl (A^{\dag})^n[\chi(A^{\dag}A\!+\!n\varepsilon\!-\!\sigma_n f, \varepsilon \!-\! n f),A^{\dag}] \!=\!(A^{\dag})^{n\!+\!1}\chi\bigl(A^{\dag}A \!+\! (n\!+\!1)\varepsilon\!-\!\sigma_{n\!+\!1}f,\varepsilon \!-\!(n\!+\!1)f\bigr) \!-\nonumber\\
-A^{\dag}\cdot (A^{\dag})^n \chi(A^{\dag}A\!+\!n\varepsilon\!-\!\sigma_n f,\varepsilon \!-\! nf).\label{q1}
\end{eqnarray}
Taking into account the induction assumption, the last term takes the form:
\begin{equation*}
A^{\dag}\cdot (A^{\dag})^n
\chi(A^{\dag}A+n\varepsilon-\sigma_n f,\varepsilon - nf) = A^{\dag}L_n +A^{\dag}
\sum\limits_{k=0}^{n-1}C_n^k (A^{\dag})^{n-k}L_k.
\end{equation*}
Substituting this expression into (\ref{q1}) and then the resulting expression into (\ref{q2}), we obtain
\begin{eqnarray*}
\fl L_{n\!+\!1} \!\!=\! (A^{\dag})^{n\!+\!1} \!\chi\bigl(\!A^{\dag}A\!\!+\!\!(n\!\!+\!\!1)\varepsilon\!\!-\!\!\sigma_{n\!+\!1}\!f,\varepsilon \!\!-\!\!(n\!\!+\!\!1)\!f\!\bigr) \!-\!\! A^{\dag}L_n \!\!-\!\! A^{\dag} \!\sum\limits_{k=0}^{n-1}\!C_n^k (A^{\dag})^{n\!-\!k}\!L_k \!\!-\!\! \!\sum\limits_{k=0}^{n-1}\!C_n^k (A^{\dag})^{n\!-\!k}\!L_{k\!+\!1}\\
=(A^{\dag})^{n\!+\!1}\chi\bigl(A^{\dag}A\!+\!(n\!+\!1)\varepsilon\!-\!\sigma_{n\!+\!1}f,\varepsilon \!-\!(n\!+\!1)f\bigr) - \sum\limits_{k=0}^{n}C_{n\!+\!1}^k (A^{\dag})^{n\!+\!1\!-\!k}L_k.
\end{eqnarray*}
Thus, the proposition is proven.

\section[Particular examples]{Particular examples: $\mu,\nu=1$ and $\mu,\nu=\overline{1,2}$}\label{ap3}

For both the examples we assume $q\ne 1$.

\noindent{\bf Example 1.} Let us consider first the simplest case when only one mode is possible for a composite boson's constituents: $\mu,\nu=1$. The number of the modes $\alpha$ is not significant here, as further treatment concerns only one fixed arbitrary mode $\alpha$. Then the creation and annihilation operators $A^{\dag}_{\alpha}$, $A_{\alpha}$ of the composite boson according to (\ref{anzats}) reduce to (the fixed index $\alpha$ is omitted)
\begin{equation}
A^{\dag} = \Phi^{11}a^{\dag}_1b^{\dag}_1,\quad A = \overline{\Phi^{11}} b_1 a_1.
\end{equation}
We require the composite bosons to be algebraically represented by deformed bosons, i.e. that defining equality $\mathcal{A}\mathcal{A}^\dag=\phi(\mathcal{N}+1)$ for deformed bosons holds on any $n$-composite bosons state $(A^{\dag})^n|O\rangle$:
\begin{eqnarray}
A A^\dag\cdot (A^\dag)^n|O\rangle=\phi(N+1)(A^{\dag})^n|O\rangle\ \Leftrightarrow\nonumber\\
|\Phi^{11}|^2 [n^a_1+1]_{-q}[n^b_1+1]_{-q}(A^\dag)^n|O\rangle = \phi(n+1)(A^\dag)^n|O\rangle,\label{n=1_req}
\end{eqnarray}
where we have introduced the number operators $n^a_1$, $n^b_1$ for $q$-deformed constituent fermions.   Taking into account the normalization condition $|\Phi^{11}|^2=1$ and the equality $n^i_1 (A^\dag)^n|O\rangle = n (A^\dag)^n|O\rangle$, $i=a,b$, relation~(\ref{n=1_req}) is rewritten as
\begin{equation*}
[n+1]_{-q}[n+1]_{-q}(A^\dag)^n|O\rangle = \phi(n+1)(A^\dag)^n|O\rangle.
\end{equation*}
The latter implies $\phi(n)= ([n]_{-q})^2$.

\noindent{\bf Example 2.} Next, let us consider the case of two modes $\mu=1,2$ and $\nu=1,2$. For the creation and annihilation operators $A^{\dag}_{\alpha}$, $A_{\alpha}$ we obtain (the fixed $\alpha$ is omitted as before)
\begin{eqnarray}
A^{\dag} = \sum_{\mu,\nu=1}^2 \Phi^{\mu\nu}a_{\mu}^{\dag}b_{\nu}^{\dag} = \Phi^{11}a^{\dag}_1b^{\dag}_1 + \Phi^{12}a^{\dag}_1b^{\dag}_2 + \Phi^{21}a^{\dag}_2b^{\dag}_1 + \Phi^{22}a^{\dag}_2b^{\dag}_2,\\
A = \sum_{\mu,\nu=1}^2 \overline{\Phi^{\mu\nu}}b_{\nu}a_{\mu} = \overline{\Phi^{11}}b_1a_1 + \overline{\Phi^{12}}b_2a_1 + \overline{\Phi^{21}}b_1a_2 + \overline{\Phi^{22}}b_2a_2.
\end{eqnarray}
As in Example~1, for the validity of the realization of  composite bosons by deformed bosons we require the following equality
\begin{equation}
A (A^\dag)^{n+1}|O\rangle=\phi(N+1)(A^{\dag})^n|O\rangle.\label{2_req}
\end{equation}
As an auxiliary step, let us present the operator $(A^\dag)^n$ as the sum
\begin{equation}
(A^\dag)^n = \sum_{k,l=0}^n (-1)^{n(n-1)/2}C_n^{kl}(\Phi) (a_2^\dag)^k(a_1^\dag)^{n-k}(b_2^\dag)^l(b_1^\dag)^{n-l}
\end{equation}
with the coefficients $C_n^{kl}(\Phi)\equiv C_n^{kl}(\Phi^{11},\Phi^{12},\Phi^{21},\Phi^{22})$ written in the form
\begin{equation}
C_n^{kl}(\Phi) = \sum_{j=\max(0,l+k-n)}^{\min(k,l)} P_n^{kl}(j) (\Phi^{22})^j(\Phi^{21})^{k-j}(\Phi^{12})^{l-j}(\Phi^{11})^{n-k-l+j}.\label{C_def}
\end{equation}
After some algebra we derive the following recurrence relations for the coefficients $P_n^{kl}(j)$:
\begin{eqnarray*}
\fl P^{kl}_{n+1}(j) = P^{kl}_n(j) + (-1)^{n+k-1}P_n^{k-1,l}(j) + (-1)^{n+l-1}P_n^{k,l-1}(j) + (-1)^{k+l}P_n^{k-1,l-1}(j-1),\\
P^{0l}_{n+1}(0) = P_n^{0l}(0) + (-1)^{n+l-1}P_n^{0,l-1}(0),\\
P^{k0}_{n+1}(0) = P_n^{k0}(0) + (-1)^{n+k-1}P_n^{k-1,0}(0),\\
P^{n+1,l}_{n+1}(l) = P_n^{nl}(l) + (-1)^{n+l-1} P^{n,l-1}_n(l-1),\\
P^{k,n+1}_{n+1}(k) = P_n^{kn}(k) + (-1)^{n+k-1} P^{k-1,n}_n(k-1),\quad 1\le k,l \le n,
\end{eqnarray*}
with the initial conditions
\begin{eqnarray*}
P_n^{00}(0) = P_n^{0n}(0) = P_n^{n0}(0) = P_n^{nn}(n) = 1,\\
P_n^{kl}(j) = 0\ \ {\rm if}\ \ j>\min(k,l)\ \ {\rm or} \ \ j<\max(0,k+l-n).
\end{eqnarray*}
In what follows we need only a few coefficients $P^{kl}_n(j)$ for which we give their explicit expressions, as the solutions of the above recurrence relations:
\begin{eqnarray}
P_n^{01}(0) = P_n^{10}(0) = \frac{1-(-1)^n}{2},\label{coef_1}\\
P_n^{11}(0) = -n + \frac{1-(-1)^n}{2},\quad P_n^{11}(1) = n,\label{coef_2}\\
P_n^{02}(0) = P_n^{20}(0) = \frac12 n + \frac{(-1)^n-1}{4},\label{coef_3}\\
P_{n+1}^{12}(0) = P_{n+1}^{21}(0) = \frac32 n\frac{1+(-1)^n}{2},\label{coef_4}\\
P_{n+1}^{12}(1) = P_{n+1}^{21}(1) = -n\frac{1+(-1)^n}{2},\label{coef_5}\\
P_{n+1}^{22}(0) = (3/4-3n/2)\frac{1+(-1)^n}{2} + \frac34 n^2-\frac34,\label{coef_6}\\
P_{n+1}^{22}(1) = (n-1)\frac{1+(-1)^n}{2}-n^2+1,\label{coef_7}\\
P_{n+1}^{22}(2) = n(n+1)/2.\label{coef_8}
\end{eqnarray}
Now rewrite the l.h.s. and r.h.s. of (\ref{2_req}) respectively as
\begin{eqnarray*}
\fl AA^{n+1}|O\rangle = \sum_{k,l=0}^n (-1)^{n(n-1)/2} \Bigl\{[k+1]_{-q}[l+1]_{-q}\overline{\Phi^{22}} C_{n+1}^{k+1,l+1}(\Phi) +\\
\fl\qquad+ (-1)^l [k\!+\!1]_{-q}[n\!+\!1\!-\!l]_{-q} \overline{\Phi^{21}} C_{n+1}^{k+1,l}(\Phi) + (-1)^k [n\!+\!1\!-\!k]_{-q}[l\!+\!1]_{-q} \overline{\Phi^{12}} C_{n+1}^{k,l+1}(\Phi)\!+\\
+ (-1)^{k+l} [n\!+\!1\!-\!k]_{-q}[n\!+\!1\!-\!l]_{-q} \overline{\Phi^{11}} C_{n+1}^{kl}(\Phi)\Bigr\} (a_2^\dag)^k (a_1^\dag)^{n-k} (b_2^\dag)^l (b_1^\dag)^{n-l}|O\rangle,\\
\fl \phi(N+1)(A^{\dag})^n|O\rangle = \phi(n+1) \sum_{k,l=0}^n (-1)^{n(n-1)/2}C_n^{kl}(\Phi) (a_2^\dag)^k(a_1^\dag)^{n-k}(b_2^\dag)^l(b_1^\dag)^{n-l}|O\rangle.
\end{eqnarray*}
Since the vectors $(a_2^\dag)^k(a_1^\dag)^{n-k}(b_2^\dag)^l(b_1^\dag)^{n-l}|O\rangle$, $k,l = 0,...,n$, $n=0,1,2,...$, form a basis in the Hilbert space, we may equate the corresponding summands, and thus the requirement (\ref{2_req}) for $n\ge1$ is equivalent to the following system of equations:
\begin{eqnarray}
\fl [k+1]_{-q}[l+1]_{-q}\overline{\Phi^{22}} C_{n+1}^{k+1,l+1}(\Phi)
+ (-1)^l [k+1]_{-q}[n+1-l]_{-q} \overline{\Phi^{21}} C_{n+1}^{k+1,l}(\Phi) +\nonumber\\
\fl + (-1)^k [n\!+\!1\!-\!k]_{-q}[l\!+\!1]_{-q} \overline{\Phi^{12}} C_{n+1}^{k,l+1}(\Phi)
+ (-1)^{k+l} [n\!+\!1\!-\!k]_{-q}[n\!+\!1\!-\!l]_{-q} \overline{\Phi^{11}} C_{n+1}^{kl}(\Phi) -\nonumber\\
- \phi(n+1) C_n^{kl}(\Phi) = 0,\qquad k,l = 0,...,n,\  n=1,2,...\  .\label{2_system}
\end{eqnarray}
Taking here $k=l=0$, and using (\ref{C_def}) and expressions (\ref{coef_1})-(\ref{coef_2}), we arrive at the equations
\begin{eqnarray*}
\fl\Bigl(-n+\frac{1+(-1)^n}{2}-1\Bigr)\overline{\Phi^{22}}\Phi^{21}\Phi^{12}(\Phi^{11})^{n-1} + (n+1)\overline{\Phi^{22}} \Phi^{22} (\Phi^{11})^n +\\
+ \frac{1+(-1)^n}{2}[n+1]_{-q}\overline{\Phi^{21}}\Phi^{21}(\Phi^{11})^n + \frac{1+(-1)^n}{2}[n+1]_{-q} \overline{\Phi^{12}}\Phi^{12}(\Phi^{11})^n +\\
+ [n+1]_{-q}^2\overline{\Phi^{11}}(\Phi^{11})^{n+1} = \phi(n+1) (\Phi^{11})^n,\quad n=1,2,...\,.
\end{eqnarray*}
Due to the normalization condition the matrix $\Phi$ has at least one nonzero element.  Let this be $\Phi^{11}\ne 0$ (for any other nonzero element the subsequent treatment is analogous). Then, from the last equation after dividing by $(\Phi^{11})^n$ and replacing $n+1\rightarrow n$ we obtain the following expression for the structure function:
\begin{eqnarray}
\fl \phi(n) = \Bigl(\frac{1-(-1)^n}{2}-n\Bigr)\overline{\Phi^{22}}\Phi^{21}\Phi^{12}(\Phi^{11})^{-1} + n\overline{\Phi^{22}} \Phi^{22} + \frac{1-(-1)^n}{2}[n]_{-q}\overline{\Phi^{21}}\Phi^{21} +\nonumber\\
+\frac{1-(-1)^n}{2}[n]_{-q} \overline{\Phi^{12}}\Phi^{12} + [n]_{-q}^2\overline{\Phi^{11}}\Phi^{11}.\label{2_str_fun}
\end{eqnarray}
Next, let us take $k=1$, $l=0$ in (\ref{2_system}).  Then, using (\ref{C_def}) together with (\ref{coef_1})-(\ref{coef_5}), after respective transformations we rewrite (\ref{2_system})  in the following form:
\begin{eqnarray}
\fl f_1(n) \overline{\Phi^{22}}\Phi^{12}(\Phi^{21})^2 + f_2(n) \overline{\Phi^{22}}\Phi^{22}\Phi^{21}\Phi^{11} + f_3(n) \overline{\Phi^{12}}\Phi^{12}\Phi^{21}\Phi^{11} + f_4(n) \overline{\Phi^{12}}\Phi^{22}(\Phi^{11})^2 +\nonumber\\
+ f_5(n) \overline{\Phi^{21}}(\Phi^{21})^2\Phi^{11} + f_6(n) \overline{\Phi^{11}}\Phi^{21}(\Phi^{11})^2 = 0,\label{2_eq2}
\end{eqnarray}
where
\begin{eqnarray*}
\fl\qquad f_1(n) = [2]_{-q} P^{21}_{n+1}(0) - P^{10}_{n}(0)P^{11}_{n+1}(0) =  [2]_{-q}\frac{3n}{2}\frac{1\!+\!(-1)^n}{2} + (n\!+\!1)\frac{1\!-\!(-1)^n}{2},\\
\fl\qquad f_2(n) = [2]_{-q}P^{21}_{n+1}(1) - P^{10}_n(0)P^{11}_{n+1}(1) = -[2]_{-q} n \frac{1\!+\!(-1)^n}{2} - (n\!+\!1)\frac{1\!-\!(-1)^n}{2},\\
\fl\qquad f_3(n) = -[n]_{-q}P^{11}_{n+1}(0) - [n\!+\!1]_{-q}P^{10}_n(0)P^{01}_{n+1}(0) = [n]_{-q}\bigl(n+\frac{1-(-1)^n}{2}\bigr),\\
\fl\qquad f_4(n) = -[n]_{-q} P^{11}_{n+1}(1) = -[n]_{-q}(n+1),\\
\fl\qquad f_5(n) = [2]_{-q}[n\!+\!1]_{-q}P^{20}_{n+1}(0) \!-\! [n\!+\!1]_{-q}P^{10}_n(0)P^{10}_{n+1}(0) = [2]_{-q}[n\!+\!1]_{-q}\Bigl(\frac{n}{2}\!+\!\frac{1\!-\!(-1)^n}{4}\Bigr),\\
\fl\qquad f_6(n) = -[n]_{-q}[n\!+\!1]_{-q}P^{10}_{n+1}(0) - [n\!+\!1]^2_{-q}P^{10}_n(0) =\\
= -[n]_{-q}[n\!+\!1]_{-q} \frac{1\!+\!(-1)^n}{2} - [n\!+\!1]_{-q}^2 \frac{1\!-\!(-1)^n}{2}.
\end{eqnarray*}
Using the linear independence of $f_6(n)$ from $f_1(n),...,f_5(n)$ and that $\Phi^{11}\ne 0$, from (\ref{2_eq2}) we obtain
\begin{equation}
\overline{\Phi^{11}}\Phi^{21}(\Phi^{11})^2 f_6(n) = 0
\quad \Rightarrow \quad \Phi^{21} = 0. \label{el2}
\end{equation}
Likewise, considering the case of $k=0$, $l=1$ in (\ref{2_system}) we deduce
\begin{equation}
\Phi^{12} = 0.\label{el3}
\end{equation}
Now examine the case $k=l=1$ in (\ref{2_system}). Substituting (\ref{C_def}) together with (\ref{coef_2})-(\ref{coef_8}) into (\ref{2_system}), after dividing by $(\Phi^{22})^{n-3}$ we obtain
\begin{eqnarray}
\fl g_1(n) \overline{\Phi^{22}}(\Phi^{21})^2(\Phi^{12})^2 \!+\! g_2(n) \overline{\Phi^{22}}\Phi^{22}\Phi^{21}\Phi^{12}\Phi^{11} \!+\! g_3(n) \overline{\Phi^{22}}(\Phi^{22})^2(\Phi^{11})^2 \!+\nonumber\\
\fl\qquad + g_4(n) \overline{\Phi^{21}}(\Phi^{21})^2\Phi^{12}\Phi^{11} \!+\! g_5(n) \overline{\Phi^{21}}\Phi^{22}\Phi^{21}(\Phi^{11})^2 \!+\! g_6(n) \overline{\Phi^{12}}\Phi^{21}(\Phi^{12})^2\Phi^{11} \!+\label{2_eq3}\\
\fl\qquad\qquad + g_7(n)\overline{\Phi^{12}}\Phi^{22}\Phi^{12}(\Phi^{11})^2 \!+\! g_8(n) \overline{\Phi^{11}}\Phi^{21}\Phi^{12}(\Phi^{11})^2 \!+\! g_9(n) \overline{\Phi^{11}}\Phi^{22}(\Phi^{11})^3 \!=\! 0,\nonumber
\end{eqnarray}
where
\begin{eqnarray*}
g_1(n) = [2]_{-q}^2 P^{22}_{n+1}(0) \!-\! P^{11}_{n+1}(0)P^{11}_n(0),\\
g_2(n) = [2]_{-q}^2 P^{22}_{n+1}(0) \!-\! P^{11}_{n+1}(0)P^{11}_n(1) \!-\! P^{11}_{n+1}(1)P^{11}_n(0),\\
g_3(n) = [2]_{-q}^2 P^{22}_{n+1}(2) \!-\! P^{11}_{n+1}(1)P^{11}_n(1) = ([2]_{-q}^2/2-1) n(n\!+\!1),\\
g_4(n) = g_6(n) = - [2]_{-q}[n]_{-q}P^{21}_{n+1}(0) \!-\! [n\!+\!1]_{-q}P^{10}_{n+1}(0)P^{11}_n(0),\\
g_5(n) = g_7(n) = - [2]_{-q}[n]_{-q}P^{21}_{n+1}(1) \!-\! [n\!+\!1]_{-q}P^{10}_{n+1}(0)P^{11}_n(1),\\
g_8(n) = [n]_{-q}^2P^{11}_{n+1}(0) \!-\! [n\!+\!1]_{-q}^2P^{11}_n(0),\\
g_9(n) = [n]_{-q}^2P^{11}_{n+1}(1) \!-\! [n\!+\!1]_{-q}^2P^{11}_n(1) = [n]_{-q}^2(n\!+\!1) \!-\! [n\!+\!1]_{-q}^2n.
\end{eqnarray*}
Taking into account (\ref{el2}) and (\ref{el3}) relation~(\ref{2_eq3}) reduces to
\begin{equation}
g_3(n) \overline{\Phi^{22}}(\Phi^{22})^2(\Phi^{11})^2 + g_9(n) \overline{\Phi^{11}}\Phi^{22}(\Phi^{11})^3 = 0.\label{2_eq4}
\end{equation}
Using again the linear independence of the functions $g_6(n)$ and $g_9(n)$, and recalling that $\Phi^{11}\ne 0$, from (\ref{2_eq4}) we obtain
\begin{equation*}
\overline{\Phi^{11}}\Phi^{22}(\Phi^{11})^3 = 0 \quad \Rightarrow \quad \Phi^{22}=0.
\end{equation*}
Substitution of this element along with two previous ones (\ref{el2}) and (\ref{el3}) in (\ref{2_str_fun}), and the use of the normalization condition gives the resulting expression for the structure function in the case under consideration: \ $\phi(n) = [n]_{-q}^2$.

Remark that if $q=0$, the uncertainty ``$0^0$'' in $\phi(n)$ at $n=0$ is resolved using the condition $\phi(0)=0$ that results in~$\phi(n)=\theta(n)$.

\section*{References}

\end{document}